\documentclass{article}

\usepackage{fullpage}
\usepackage{amssymb}

\usepackage{lipsum}
\usepackage{amsfonts}
\usepackage{amssymb}
\usepackage{amsthm}
\usepackage{amsmath}
\usepackage{graphicx}
\usepackage{epstopdf}
\usepackage{url}
\usepackage{cite}
\usepackage{booktabs}    
\usepackage{multirow}    
\theoremstyle{plain}
\newtheorem{thrm}{Theorem}[section]
\newtheorem{lmm}[thrm]{Lemma}

\theoremstyle{definition}
\newtheorem{dfntn}[thrm]{Definition}

\newtheorem{rmrk}[thrm]{Remark}

\title{Sparse Configuration Interaction for the Electronic Schr\"odinger Equation Revisited: Complete Basis Set Limit Complexity and Quantum-Encoding Impact}
\author{Michael Griebel\thanks{Institute for Numerical Simulation, University of Bonn, Friedrich-Hirzebruch-Allee 7, 53115 Bonn, Germany.
}
\thanks{Fraunhofer Institute for Algorithms and Scientific Computing, Schloss Birlinghoven, D-53757 Sankt Augustin, Germany. ({\em jan.hamaekers@scai.fraunhofer.de})
  }
\and Jan Hamaekers\footnotemark[2]
}

\usepackage{amsopn}

\newcommand\dvec[1]{\mathbf{#1}}
\newcommand\Nvec[1]{\vec{#1}}
\newcommand\Ndvec[1]{\mspace{3mu}\Nvec{\dvec{#1}}\mspace{3mu}}
\newcommand\nfrac[2]{#1/#2}
\newcommand\Intd[1]{\,d{#1}}

\newcommand{\Oh}{\mathcal{O}}

\newcommand\myell{L}

\newcommand\spinu{\uparrow}
\newcommand\spind{\downarrow}
\newcommand\spinup{{+\tfrac{1}{2}}}
\newcommand\spindown{{-\tfrac{1}{2}}}

\newcommand\orderO{{\mathcal{O}}}
\newcommand\myiota{{\xi}}
\newcommand\myKa{{|\mathcal{B}^1_L|}}
\newcommand\myKb{{|\mathcal{B}^1_L|}}
\newcommand\myKc{{|\mathcal{B}^1_L|}}
\newcommand\mytfrac[2]{{#1}/{\left(#2\right)}}
\newcommand\myg{{\tilde{f}}}

\begin{document}

\maketitle

\begin{abstract}
In this article we revisit regularity results for eigenfunctions in the discrete spectrum of the electronic Schr\"odinger equation and study their consequences for approximation complexity.
In particular, for the convergence to the complete basis set limit, it can be shown that the curse of dimensionality in the leading algebraic exponent can be mitigated.
That is, for general sparse grid constructions, the main term of the convergence rate with respect to the number of degrees of freedom is independent of the number of electrons.
These insights indicate potential benefits for classical numerical solvers of the electronic Schr\"odinger equation and also for quantum-computing approaches through new qubit-efficient wavefunction encodings.
\end{abstract}



\section{Introduction}
 One important goal of computational quantum chemistry is the solution of the Schr\"odinger equation, since it describes the behavior of atoms and electrons on the nanoscale. Due to the high dimensionality of the wavefunction, this task is regarded as a major challenge because of the {\em curse of dimensionality} (CoD). As will be discussed below, for conventional discretizations, the asymptotic convergence rate with respect to the number of involved degrees of freedom may deteriorate with dimension, here $3N$ for $N$ electrons. In addition, the constants entering the error estimates may also grow rapidly with dimension. Quantum-computing based algorithms may constitute a possible future approach to these difficulties, provided that sufficiently capable hardware becomes available.
 But as will be discussed in this article, in the case of the isolated eigenvalues of the discrete spectrum and associated eigenfunctions of the electronic Schr\"odinger operator, certain regularity properties help, on the one hand, to circumvent the CoD also for conventional approaches on classical computing resources, at least to some extent, and, on the other hand, to reduce the number of qubits required to encode such wavefunctions in the context of quantum computing algorithms.

The reference method for the solution of the electronic Schrödinger equation is the so-called Full Configuration Interaction (FCI) approach \cite{helgaker2013molecular}.
In FCI, a finite set of appropriate one-particle functions is used to build $N$-electron Slater determinants as the basis to discretize the eigenvalue problem within a conventional Galerkin approach.
Here, if $\mathcal{B}_\infty^{1}$ is a complete one-particle basis set, e.g.\ the Gauss-Hermite functions in $\mathbb{R}^3$, and if $\{\mathcal{B}^{1}_L\}_{L\in\mathbb{N}}$ is a nested sequence of finite subsets of $\mathcal{B}_\infty^{1}$, i.e.
\begin{equation*}
    \mathcal{B}^{1}_L \subset \mathcal{B}^{1}_{L+1} \text{ for all } L\in\mathbb{N} \text{  and } \bigcup_{L\in\mathbb{N}} \mathcal{B}^{1}_L = \mathcal{B}_\infty^{1}
    ,
\end{equation*}
then, for each finite one-particle basis function set $\mathcal{B}^{1}_L$, the constructed $N$-electron Slater determinants form a finite basis set
    $\mathcal{B}^{FCI}_{\mathcal{B}^1_L}$
for an associated finite-dimensional subspace of the $N$-electron space, as we define later in (\ref{equ:FCIbs}). Moreover, the FCI solution $\Psi_{\mathcal{B}^{FCI}_L}$ is the Galerkin approximation of the exact solution in this finite-dimensional subspace. This way, the sequence of FCI solutions $\Psi_{\mathcal{B}^{FCI}_L}$ converges to the exact solution in the so-called complete basis set limit (CBS) for $L\to \infty$, while the number $M$ of $N$-electron basis functions is in the order of
\begin{equation*}
    M := |\mathcal{B}^{FCI}_{\mathcal{B}^1_L}| = \binom{\myKa}{N} = \orderO\left(\myKa^N\right).
\end{equation*}
Here, we neglect spin up and down for reasons of simplicity.
Note that if we just use that the eigenfunctions are at least in $\mathcal{H}^1((\mathbb{R}^{3})^{N})$, we encounter the curse of dimensionality, i.e.\ the rate
of convergence for the CBS of the FCI method deteriorates with the number $N$ of involved electrons.
Moreover, if we consider the case of an increasing system size instead of a fixed number of electrons and assume that the number $\myKa$ of one-particle basis functions is proportional to the number $N$ of involved electrons,
then the FCI approach leads to exponentially increasing costs with the number $N$ of electrons involved, since the number of corresponding Slater determinants is given by $\binom{\myKa \approx N}{N} \approx \orderO(N^N)$, where we neglected spin again.

To overcome the curse of dimensionality of the FCI approach, various model approximations have been developed and
implemented in quantum chemistry which lead to reduced complexity.
Prominent examples are Hartree-Fock and multiconfiguration approaches (HF/MCHF), additive expansions such as configuration interaction, exponential methods such as coupled cluster techniques, perturbative schemes like M\"oller-Plesset theory, quantum Monte Carlo methods (VMC/DMC), and density-functional theory (DFT).
For some of these approaches, rigorous existence and error analyses are available, see e.g.~\cite{friesecke2003multiconfiguration,rohwedder2013error}. However, it is in general not yet completely clear how to systematically improve their approximation error with guaranteed convergence rates in a mathematically rigorous way.

Alternatively, the pioneering work of Yserentant \cite{yserentant2004regularity,yserentant2005sparse} opened the way for approaches~\cite{griebel2010tensor,hamaekers2009tensor,bachmayr2012hyperbolic,zeiser2012wavelet,kogure2022wavelet} which, on the one hand, systematically converge to the exact eigenfunctions, i.e. to the CBS of FCI, and on the other hand, are able to mitigate the curse of dimensionality in the main algebraic rate.
This is due to the fact that the eigenfunctions belong to certain Sobolev spaces of dominating mixed smoothness. Such spaces with dominating mixed smoothness are directly related to sparse grid approximation schemes, i.e.\ Sparse Configuration Interaction (SCI) approaches that promise to avoid the CoD, at least in the main term of the convergence rate~\cite{Bungartz.Griebel:2004}.
In particular, a related SCI approach was discussed in~\cite{anderson2018breaking}, which is based on general sparse grid approximation spaces introduced in~\cite{Griebel2000,Griebel2008}.

In this work we discuss a general sparse grid based approach for Sobolev spaces of dominating mixed smoothness with exponential spatial decay. This way, we improve some of our previous results~\cite{griebel2010tensor,hamaekers2009tensor}. In addition, we discuss its implications for the complete basis set limit and for possible encodings with a reduced number of qubits in quantum computing algorithms.
Compared to our earlier approach in \cite{griebel2010tensor,hamaekers2009tensor}, the present work has the advantage of a unified framework based on {\em weighted} spaces with simultaneous mixed smoothness and spatial decay that results from a Meyer wavelet characterization. A key ingredient in this framework is the use of a smooth Meyer wavelet family (cf. Appendix \ref{app:MeyerWaveletFamily}), which provides the required localization and weighted norm-characterization properties. Moreover, it gives a general approximation principle which separates level and translation truncation effects. This yields error bounds, degree of freedom estimates, and $\varepsilon$-complexity statements in one consistent setting. In particular, the final convergence statements retain the same main algebraic rate in
the involved number of degrees of freedom as in the sparse grid/hyperbolic cross analysis discussed in \cite{griebel2010tensor,hamaekers2009tensor}, and thus no $N$-dependence occurs in the main algebraic exponent while the $N$-dependency of the logarithmic terms and constants becomes explicit.
In relation to wavelet based sparse constructions as studied, for example, in \cite{zeiser2012wavelet,kogure2022wavelet}, the present analysis is tailored to the electronic Schr\"odinger setting with mixed Sobolev regularity and exponential decay and makes the implications for CBS convergence and qubit scaling explicit.

The remainder of this paper is organized as follows. In Section \ref{sec:RegularitySchroed} we recall the electronic Schrödinger equation and some properties of its solution. In Section \ref{sec:SparseCI} we introduce many-particle Sobolev spaces with spatial decay and present an approach for the efficient approximation of functions from these spaces including associated upper bounds for the error and the involved degrees of freedom. In Section \ref{sec:GSGApproxWaveFunction} we apply our results to eigenfunctions in the discrete spectrum of the electronic Schr\"odinger equation. Next, in Section \ref{sec:limits}, we discuss the impact of our results on different limits associated with approximation methods for the electronic Schr\"odinger equation. In Section \ref{sec:QuantumComputing}, we discuss implications for quantum computing algorithms, including qubit-scaling comparisons for different encodings and implementation-level caveats. We finally draw some conclusions in Section \ref{sec:conclusions}.

 \section{The Electronic Schrödinger Equation}\label{sec:RegularitySchroed}
For most problems in quantum chemistry, it is sufficient to consider the so-called Born-Oppenheimer approximation of the electronic Schr\"odinger equation, which
neglects relativistic effects, treats nuclei as classical point-like particles, and employs just the electrons as quantum mechanical particles.
For $N_{\text{nuc}}$ trapped nuclei $(\dvec{R}_k, Z_k)$ and $N$ electrons with space and spin coordinates $(\dvec{x}_i, s_i)\in\mathbb{R}^{3}\times \{\spindown,\spinup\}$, it reads as the eigenvalue problem
\begin{equation}\label{equ:SE}
    H \Psi = E\Psi
\end{equation}
with the spin-independent molecular Hamiltonian
\begin{equation*}
  H = - \frac{1}{2}\sum_{i=1}^N \Delta_i
  - \sum_{k=1}^{N_{\text{nuc}}} \sum_{i=1}^N \frac{Z_k}{|\dvec{x}_i-\dvec{R}_k|}
  + \sum_{i<j}^N \frac{1}{|\dvec{x}_i-\dvec{x}_j|}
  + \sum_{k<l}^{N_{\text{nuc}}} \frac{Z_k Z_l}{|\dvec{R}_k-\dvec{R}_l|}
\end{equation*}
written in atomic units. The solution to the $N$-electron problem (\ref{equ:SE}) lives in a $3N$-dimensional space and respects the antisymmetry condition imposed by Pauli's principle, i.e.\
\begin{equation}\label{equ:Antisymmetry}
\Psi((\dvec{x}_{P(1)},s_{P(1)}), \ldots, (\dvec{x}_{P(N)},s_{P(N)})) = (-1)^{|P|} \Psi((\dvec{x}_1,s_1), \ldots, (\dvec{x}_N,s_N))
\end{equation}
for all permutations $P$ of the symmetric group $\mathcal{S}_N$. Thus the spin variables $s_1, \ldots, s_N$ enter only through the antisymmetry conditions (\ref{equ:Antisymmetry}). Furthermore, for a given spin distribution $\Nvec{s} \in \{\spindown,\spinup\}^{N}$, the spatial part
$\Psi_{\Nvec{s}}(\Ndvec{x}):=\Psi(\Ndvec{x},\Nvec{s})$ obeys the
partial antisymmetry condition
\begin{equation*}
  \Psi_{\Nvec{s}}(P\Ndvec{x}) = (-1)^{|P|}\Psi_{\Nvec{s}}(\Ndvec{x}),
  \quad \forall P\in \mathcal{S}_{\Nvec{s}}:= \left\{P\in\mathcal{S}_N \, :
    \, P\Nvec{s}=\Nvec{s} \right\}
\end{equation*}
and problem (\ref{equ:SE}) is reduced to the eigenvalue problems
\begin{equation}\label{equ:SEspin}
    H \Psi_{\Nvec{s}} = E_{\Nvec{s}}\Psi_{\Nvec{s}},
\end{equation}
where $\Psi_{\Nvec{s}}: \mathbb{R}^{3N} \mapsto \mathbb{R}$ obeys the above partial antisymmetry condition. In the case of a spin-independent Hamiltonian $H$ with $N$ electrons, it is sufficient to consider only the $\lfloor\frac{N}{2}+1\rfloor$ spin distributions
\begin{equation*}
  \Nvec{s}^{\,(0,N)}, \ldots, \Nvec{s}^{\,(\lfloor\frac{N}{2}\rfloor,N)}, \quad \text{ where }
    \Nvec{s}^{\,(i,N)} := (\underbrace{\spindown, \ldots, \spindown}_{\text{$i$ times}}, \underbrace{\spinup, \ldots, \spinup}_{\text{$N-i$ times}}),
\end{equation*}
as discussed in \cite{yserentant2010regularity}.

The main numerical challenge in directly solving eigenvalue problem (\ref{equ:SEspin}) is its high dimensionality.
A conventional Galerkin method for discretizing the problem using a set of $N$-particle Slater determinants $\mathcal{B}^{FCI}_{\mathcal{B}^1_L}$ built from a finite basis set $\mathcal{B}^1_L$ of one-particle functions,
as in the conventional FCI method \cite{helgaker2013molecular} by
\begin{equation}\label{equ:FCIbs}
    \mathcal{B}^{FCI}_{\mathcal{B}^1_L}(N_\spinu, N_\spind) := \left\{ \bigwedge_{i=1}^{N_\spind} \phi_{k^{\spind}_i} \otimes \bigwedge_{j=1}^{N_\spinu} \phi_{k^{\spinu}_j}
    \, : \, 1 \leq k^{\spind}_1 < \cdots < k^{\spind}_{N_\spind} \leq \myKb \, \text{ and }
    1 \leq k^{\spinu}_1 < \cdots < k^{\spinu}_{N_\spinu} \leq \myKb
    \right\},
\end{equation}
would involve
\begin{equation*}
  |\mathcal{B}^{FCI}_{\mathcal{B}^1_L}(N_\spinu, N_\spind)| = \binom{\myKb}{N_\spind}\binom{\myKb}{N_\spinu} \approx  \orderO\left(\myKa^{N_\spinu + N_\spind}\right)
\end{equation*} degrees of freedom.
Note here that the error of the eigenvalue approximations behaves like the square of the corresponding eigenfunction error in the conventional Sobolev norm in $\mathcal{H}^1((\mathbb{R}^{3})^{N})$ \cite{yserentant2010regularity}, i.e.
\begin{equation*}
 \left|E_{\Nvec{s}}-E_{\mathcal{B}^{FCI}_{\mathcal{B}^1_L}}\right| \leq C(N) \|\Psi_{\Nvec{s}} -\Psi_{\mathcal{B}^{FCI}_{\mathcal{B}^1_L}}\|_{\mathcal{H}^1}^2,
\end{equation*}
where $C$ is a constant that may depend on $N$ but not on $L$.

Let us now choose Gauss-Hermite functions as the underlying one-particle basis set and assume that an eigenfunction $\Psi_{\Nvec{s}}$ is at least in the isotropic Sobolev space $\mathcal{H}^{r+1}((\mathbb{R}^{3})^{N})$. Then, from standard spectral analysis results \cite{shen2009some}
only a convergence estimate of type
\begin{equation*}
  \|\Psi_{\Nvec{s}} -\Psi_{\mathcal{B}^{FCI}_{\mathcal{B}^1_L}}\|_{\mathcal{H}^1} \leq c(N) M^{-\tfrac{1}{2}\tfrac{r}{3N}}\|\Psi_{\Nvec{s}}\|_{\mathcal{H}^{r+1}}
\end{equation*}
can be derived.
Here, we encounter the curse of dimensionality, i.e.\ the rate
of convergence for the complete basis set limit (CBS) of the FCI method deteriorates with the number $N$ of involved electrons. Additionally note that the constant $c(N)$ and also the wavefunction norm
$\|\Psi_{\Nvec{s}}\|_{\mathcal{H}^{r+1}}$ are not dependent on the number of degrees of freedom $M$, but may still grow strongly with the number of electrons $N$ when considering sequences of systems with increasing electron number.

As outlined already in the introduction,
in case of an increasing system size instead of a fixed number of electrons,
the FCI approach leads to exponentially increasing costs with the number $N$ of electrons involved, e.g.\ in case of total antisymmetry (all electrons have the same spin) the number $M$ of involved Slater-determinants is given by $M=\binom{\myKa \approx N}{N} \approx \orderO(N^N)$.
To mitigate this unfavorable scaling of FCI, a wide range of approximate electronic-structure methods have been developed in quantum chemistry.
However, despite their practical success, a fully rigorous framework for systematic error improvement is still missing.
Here, the regularity theory developed by Yserentant \cite{yserentant2004regularity,yserentant2005sparse} opened the possibility for methods that preserve the convergence to the CBS of the FCI ansatz while mitigating dimensional bottlenecks.
Concretely, it was shown that the $N$-th order mixed derivatives of the eigenfunctions of bounded states, i.e. the eigenfunctions of the discrete spectrum, both exist and decay exponentially, see~\cite{yserentant2010regularity,yserentant2011mixed}.
In more detail, Yserentant proved that the bounded states are in the anisotropic Sobolev space $\mathcal{H}_{mix}^{t,1}$ with $t<\tfrac{3}{4}$ for the general spin distribution case and with $t=1$ for the special case of total antisymmetry. Recently, it has been shown in~\cite{meng2023mixed} that the result for the total antisymmetric case can be improved, such that even $t<\tfrac{5}{4}$ holds.
Altogether, the smoothness of the wavefunction increases with the number of electrons
such that the main rate of convergence does not deteriorate with the number of electrons. Namely,
in \cite{kreusler2012mixed,yserentant2010regularity} appropriate
sparse CI $N$-particle basis sets $\mathcal{B}^{SCI}_{\mathcal{B}^1_L}$ have been constructed and associated convergence estimates of type
\begin{equation*}
  \|\Psi_{\Nvec{s}} -\Psi_{\mathcal{B}^{SCI}_{\mathcal{B}^1_L}}\|_{\mathcal{H}^1} \leq c(N)  M^{-\tfrac{1}{1+\alpha}\tfrac{t}{3}} m(M,N) \|\Psi_{\Nvec{s}}\|_{\mathcal{H}^{t,1}_{mix}}
\end{equation*}
have been derived
using eigenfunctions of certain one-particle Schrödinger like operators as the underlying one-particle basis set $\mathcal{B}^1_L$, where $M:=|\mathcal{B}^{SCI}_{\mathcal{B}^1_L}|$ is
the number of Slater determinants in the sparse CI basis $\mathcal{B}^{SCI}_{\mathcal{B}^1_L}$,
$m$ is some logarithmic term in $M$ depending on the specific method (like e.g.\  $m(M,N)=\log(M)^{N-1}$) and $\alpha$ is dependent on the underlying one-particle basis set.
For example, $\alpha=1$ for Gauss-Hermite functions resulting from the one-particle harmonic oscillator and it holds in general $\alpha>0$ as shown in \cite{yserentant2010regularity}. Moreover, alternative underlying one-particle basis sets and sparse grid or hyperbolic cross based approximation space constructions have been discussed in \cite{griebel2007sparse,griebel2010tensor,hamaekers2009tensor,zeiser2012wavelet}. We discuss the construction of our specific sparse CI basis set $\mathcal{B}^{SCI}_{\mathcal{B}^1_L}$ in Section~\ref{sec:limits}.

In \cite{griebel2010tensor,hamaekers2009tensor} a tensor product multiscale basis has been used to show for arbitrary $\delta >0$ a main rate\footnote{Here, we neglect the logarithmic terms.} of $\orderO(M^{-\mytfrac{t}{3(1+\delta)}})$ for the error in the $\mathcal{H}^1$-norm for functions in anisotropic Sobolev spaces $\mathcal{H}_{mix}^{t,1}((\mathbb{R}^{3})^N)$ with certain spatial decay properties,
where in particular the case $t=\tfrac{1}{2}$ has been discussed. Furthermore, the general partial antisymmetric case $t=\tfrac{3}{4(1+\epsilon)}$, $\epsilon >0$ results in a main rate of $\orderO(M^{-\mytfrac{1}{4(1+\epsilon)(1+\delta)}})$ for the wavefunction
error in the $\mathcal{H}^1$-norm and hence $\orderO(M^{-\mytfrac{1}{2(1+\epsilon)(1+\delta)}})$ for the eigenvalue error. Accordingly, the total antisymmetric case $t=\frac{5}{4(1+\epsilon)}$ results in a main rate of $\orderO(M^{-\mytfrac{5}{12(1+\epsilon)(1+\delta)}})$ for the wavefunction error and $\orderO(M^{-\mytfrac{5}{6(1+\epsilon)(1+\delta)}})$ for the eigenvalue error. In addition, in \cite{kreusler2012mixed} it has been shown that the eigenfunctions can be written as the product of a universal factor that covers the electron-electron cusps and a regular part $\Phi \in \mathcal{H}_{mix}^{t,1}$ with $t=1$ independent of whether $\Psi$ obeys a total antisymmetric or just a general partial antisymmetric spin distribution. Accordingly, using a sparse grid/hyperbolic cross type approach \cite{griebel2010tensor,hamaekers2009tensor}, a main rate of $\orderO(M^{-\mytfrac{1}{3(1+\delta)}})$ for the error of $\Phi$ in the $\mathcal{H}^1((\mathbb{R}^3)^N)$-norm can be deduced, which corresponds to the general isotropic one-particle case.

Moreover, several wavefunction based methods can deliver near benchmark-level accuracy for suitable problem classes.
In particular, coupled-cluster hierarchies (e.g.\ CCSD(T) and higher-order variants) are among the most successful systematic correlation methods in weakly correlated regimes, often yielding highly accurate energies in practice \cite{bartlett2007coupled,helgaker2013molecular}.
For strongly correlated systems, tensor-network representations such as matrix-product states, the density-matrix renormalization group (DMRG), and related tensor-train formulations, provide controlled low-rank approximations in very large active spaces \cite{schollwock2011dmrg}.
In addition, selected and incremental configuration-interaction strategies (including modern many-body expanded FCI formulations) offer another route to high-accuracy solutions and have been developed extensively, see e.g.\ \cite{eriksen2021incremental}.
Complementary stochastic CI methods, most prominently the full configuration-interaction quantum Monte Carlo (FCIQMC) approach, provide a probabilistic route to near-FCI accuracy in large determinant spaces \cite{booth2009fciqmc}.

Let us note furthermore that, if appropriate quantum computing devices will become available in the future, a further approach to tackle the curse of dimensionality would be to apply quantum computing algorithms \cite{babbush2016exponentially,babbush2017exponentially,lee2023evaluating,su2021fault}.
One crucial aspect of these algorithms is the encoding of wavefunctions onto a finite set of qubits, where, for example, fermionic mapping schemes are employed in the second quantization framework.
When using $\myKc$ spin orbitals to approximate the $N$-electron wavefunction, the corresponding quantum states can be represented using $n = O(\myKc)$ qubits. Additionally, there exist efficient and compact encoding schemes that further reduce the required qubit count. For fixed numbers $N_\spinu$ and $N_\spind$ of spin-up and spin-down electrons, with $N = N_\spinu + N_\spind$, the quantum states — and similarly, the number of Slater determinants in first quantization — can be mapped to just $\left\lceil\log_2\left(\binom{\myKc}{N_\spinu}\binom{\myKc}{N_\spind}\right)\right\rceil$ qubits. For more details, see \cite{babbush2017exponentially,chamaki2022compacct}.

In this paper, we pursue a deterministic sparse CI approach based on the available mixed-regularity theory and thereby obtain rigorous a priori error estimates together with corresponding $\varepsilon$-complexity bounds. Our results imply a substantial reduction in the number of necessary Slater determinants and, hence, in the number of degrees of freedom, which translates into significantly reduced computational costs for conventional Galerkin discretization approaches on classical computers. At the same time, these reductions are also relevant in the context of quantum algorithms for the electronic Schr\"odinger equation, because state-space compression directly affects qubit requirements.

\section{Approximation in Many-Particle Sobolev Spaces}\label{sec:SparseCI}
 In this section we introduce an approach to approximate functions in many-particle Sobolev spaces of dominating mixed smoothness with spatial decay toward infinity by appropriate Meyer wavelet based general sparse grid spaces.
To this end, after a few preliminaries, we first define the necessary many-particle spaces and discuss their characterization by Meyer wavelets. Then, we introduce and analyze appropriate general sparse grid approximation spaces and their application to approximate functions in many-particle Sobolev spaces of dominating mixed smoothness on $(\mathbb{R}^D)^N$ with an {\em exponential spatial decay} toward infinity.

Let us note here that, in comparison to the approximation on a compact domain such as the unit cube or torus, approximation on the whole space $(\mathbb{R}^D)^N$ requires additional control of the behavior toward $\pm$ infinity. On $([0,1]^D)^N$, the main issues are mixed regularity and possible boundary effects, whereas on the whole space $(\mathbb{R}^D)^N$ one has to balance local resolution of singular or anisotropic features with global spatial decay in weighted norms. For sparse grid/hyperbolic cross and wavelet based constructions this means that,
besides the usual level selection in mixed directions, one needs an additional mechanism to also control the selection of finite subsets of the infinitely many translates that is based on the decay of the coefficients induced by the spatial decay. This interplay between mixed smoothness and spatial decay is the key reason why whole-space approximation needs a different approximation-space design than its compact-domain counterpart.

\subsection{Preliminaries}
In this work, we consider various types of many-particle Sobolev spaces associated with $N$ particles in dimension $D$. Here, we use the notation $\Ndvec{x} = (\dvec{x}_1, \ldots, \dvec{x}_N) \in (\mathbb{R}^D)^N$ with $\dvec{x}_i = (x_{i,1}, \ldots, x_{i,D}) \in \mathbb{R}^D$ for the involved variables and we define a norm on $(\mathbb{R}^D)^N$ by
\begin{equation*}
    |\Ndvec{k}|_{p,q} :=  |(|\dvec{k}_1|_p, \ldots, |\dvec{k}_N|_p)|_q = \left(\sum_{i=1,\ldots,N} |\dvec{k}_i|_p^q\right)^{\tfrac{1}{q}},
\end{equation*}
 where $|\cdot|_p$ is the conventional $p$-norm on $\mathbb{R}^D$.
For
$N$-dimensional multi-indices, we use the notation $\Nvec{l} \in \mathbb{N}^N$.
We denote the set $\{1, 2, 3, \ldots \}$ of positive integers by $\mathbb{N}$ and the set $\{0, 1, 2, \ldots \}$ of natural numbers including zero by $\mathbb{N}_0$.
Moreover, we define the Fourier transform of an absolutely integrable
function $f \in \mathcal{L}^1((\mathbb{R}^D)^N)$ by
\begin{equation}\label{equ:FourierTransform}
  \mathcal{F}[f](\Ndvec{k}) := \left(\frac{1}{\sqrt{2\pi}}\right)^{DN}\int_{(\mathbb{R}^D)^N} e^{-i \Ndvec{k}^T \Ndvec{x}} f(\Ndvec{x})  \Intd{\Ndvec{x}}
  , \quad \Ndvec{k} \in (\mathbb{R}^D)^N
   ,
\end{equation}
where we also write $\hat{f} := \mathcal{F}[f]$. Note here that
$\Ndvec{k}^T \Ndvec{x} = \sum_{j=1}^N \dvec{k}_j^T \dvec{x}_j$ is equal to the scalar
product on $(\mathbb{R}^D)^N$. We denote the Schwartz space of all
complex-valued rapidly decreasing infinitely differentiable functions
in $(\mathbb{R}^D)^N$ by
\begin{equation*}
  \mathcal{S}((\mathbb{R}^D)^N) = \left\{ f \in C^\infty((\mathbb{R}^D)^N) \,:\, \sup_{\Ndvec{x}\in(\mathbb{R}^D)^N} |\Ndvec{x}^{\Ndvec{\beta}} D^{\Ndvec{\alpha}} f| < \infty,\, \Ndvec{\alpha}, \Ndvec{\beta} \in (\mathbb{N}_0^{D})^N \right\}
\end{equation*}
and its topological dual, the space of all tempered distributions on
$(\mathbb{R}^D)^N$, by $\mathcal{S}^\prime((\mathbb{R}^D)^N)$. Then, for $f
\in \mathcal{S}((\mathbb{R}^D)^N)$, the Fourier transform $\hat{f}$ given
by (\ref{equ:FourierTransform}) is again in
$\mathcal{S}((\mathbb{R}^D)^N)$ and in particular the mapping $\mathcal{F}
: \mathcal{S}((\mathbb{R}^D)^N) \rightarrow \mathcal{S}((\mathbb{R}^D)^N)$ is
a linear isomorphism on $\mathcal{S}((\mathbb{R}^D)^N)$. Moreover, the inverse
Fourier transform operator $\mathcal{F}^{-1}$ is given by
\begin{equation}
  \mathcal{F}^{-1}[\hat{f}](\Ndvec{x})  = \left(\frac{1}{\sqrt{2\pi}}\right)^{DN}\int_{(\mathbb{R}^D)^N} e^{i \Ndvec{x}^T \Ndvec{k}} \hat{f}(\Ndvec{k})  \Intd{\Ndvec{k}}
  .
\end{equation}

It is well-known that both $\mathcal{F}$ and
$\mathcal{F}^{-1}$ can be extended to a continuous, linear and
bijective operator from $\mathcal{S}^\prime((\mathbb{R}^D)^N)$ onto
itself. Then, the Fourier transform $\mathcal{F}[f]$ of
a tempered distribution $f \in \mathcal{S}^\prime((\mathbb{R}^D)^N)$ can
be defined by
\begin{equation*}
  \langle \mathcal{F}[f] \,,\, g \rangle = \langle f \,,\, \mathcal{F}[g] \rangle \quad \text{ for all } g \in \mathcal{S}((\mathbb{R}^D)^N)
  .
\end{equation*}
Note that $\mathcal{S} \subset \mathcal{L}^2 \subset
\mathcal{S}^\prime$ and $\mathcal{S}$ is dense in
$\mathcal{L}^2$. Moreover, $f \in \mathcal{L}^2((\mathbb{R}^D)^N)$ implies
$\hat{f} \in \mathcal{L}^2((\mathbb{R}^D)^N)$ and the Fourier transform
operator $\mathcal{F}$ is a unitary isometric automorphism of
$\mathcal{L}^2((\mathbb{R}^D)^N)$. Here, for $f,g \in
\mathcal{L}^2((\mathbb{R}^D)^N)$, the Fourier transforms $\hat{f},\hat{g}
\in \mathcal{L}^2((\mathbb{R}^D)^N)$ preserve the
$\mathcal{L}^2$-inner-product, i.e.\
\begin{equation}\label{equ:ParsevalPlancherel}
  \langle f \,,\, g \rangle_{\mathcal{L}^2} = \langle \hat{f} \,,\, \hat{g} \rangle_{\mathcal{L}^2}
  .
\end{equation}
Thus, $\mathcal{F}$ is $\mathcal{L}^2$-norm conserving,
i.e.\ $\|f\|_{\mathcal{L}^2} = \|\hat{f}\|_{\mathcal{L}^2}$ for $f \in
\mathcal{L}^2((\mathbb{R}^D)^N)$. Also, analogously to the so-called
Parseval (or Plancherel) formula (\ref{equ:ParsevalPlancherel}), the
identity $\langle f \,,\, g \rangle = \langle \hat{f} \,,\, \hat{g}
\rangle$ holds for $f \in \mathcal{S}^\prime((\mathbb{R}^D)^N)$ and $g \in
\mathcal{S} ((\mathbb{R}^D)^N)$.

Finally we recall some well-known relations concerning the regularity and the
decay of a function and its Fourier transform, where we consider the case of $f \in
\mathcal{L}^2((\mathbb{R}^D)^N)$. First, the function $f$ has weak partial derivatives
$D^{\Ndvec{\alpha}}_{\Ndvec{x}}f$ for all $|\Ndvec{\alpha}|_{1,1} \leq l$, if and only if
$\int_{(\mathbb{R}^D)^N} \left(\sum_{p=1}^{N} |\dvec{k}_p|_2^{2}\right)^l|\hat{f}(\Ndvec{k})|^2 \Intd{\Ndvec{k}} <
\infty$.
Second, if
$\int_{(\mathbb{R}^D)^N}\left(\sum_{p=1}^{N} |\dvec{x}_p|_2^2\right)^l|f(\Ndvec{x})|^2 \Intd{\Ndvec{x}} <
\infty$,
then its Fourier transform $\hat{f}$ possesses weak partial
derivatives $D^{\Ndvec{\alpha}}_{\Ndvec{k}}\hat{f}$ for all
$|\Ndvec{\alpha}|_{1,1} \leq l$.

\subsection{Many-particle Sobolev spaces with spatial decay}
We first consider ($\mathcal{L}^2$ based) Sobolev spaces defined with the help of an appropriate weight function $w : (\mathbb{R}^D)^N \rightarrow \mathbb{R}$.
\begin{dfntn}\label{def:wSobolevParticles}
Let $w : (\mathbb{R}^D)^N \rightarrow \mathbb{R}$ be an appropriate positive continuous weight function. For $D \in \mathbb{N}$ and $N \in \mathbb{N}$ we set
  \begin{equation*}
    \mathcal{H}_{w}((\mathbb{R}^D)^N) := \left\{ f \in \mathcal{S}^\prime((\mathbb{R}^D)^N) \, : \, w \hat{f} \in \mathcal{L}^2((\mathbb{R}^D)^N) \right\}
  \end{equation*}
  with the norm
  \begin{equation*}
    \| f \|_{\mathcal{H}_{w}} := \sqrt{\int_{(\mathbb{R}^D)^N} (w(\Ndvec{k}))^2|\hat{f}(\Ndvec{k})|^2 \Intd{\Ndvec{k}}}
    , \quad \Ndvec{k} \in (\mathbb{R}^D)^N.
  \end{equation*}
\end{dfntn}
This way the weight $w$ implicitly expresses some smoothness properties of the corresponding function class.
In particular, the Sobolev space
$\mathcal{H}_w((\mathbb{R}^D)^N)$ is a Hilbert space with the Hermitian inner product
\begin{equation*}
    \langle f \,,\, g \rangle_{\mathcal{H}_{w}}
    :=  \langle w \hat{f} \,,\, w \hat{g} \rangle_{\mathcal{L}^2}
    = \int_{(\mathbb{R}^D)^N} (w(\Ndvec{k}))^{2} \hat{f}^*(\Ndvec{k}) \hat{g}(\Ndvec{k}) \Intd{\Ndvec{k}}
    .
\end{equation*}

In the following we work with the algebraic isotropic weight function
\begin{equation}\label{equ:HyperbolicCrossTwiso}
  w_{\text{iso}} : (\mathbb{R}^D)^N \rightarrow \mathbb{R} : \Ndvec{k} \mapsto \sqrt{1+\max_{n=1,\ldots,N}{|\dvec{k}_n|^2_{\infty}}},
\end{equation}
the algebraic anisotropic mixed weight function
\begin{equation}\label{equ:HyperbolicCrossTwmix}
  w_{\text{mix}} : (\mathbb{R}^D)^N \rightarrow \mathbb{R} : \Ndvec{k} \mapsto \sqrt{\prod_{p=1}^N(1+|\dvec{k}_p|_{\infty}^2)},
\end{equation}
and the exponential weight function
\begin{equation}\label{equ:Expw}
  w_{\text{exp}} : (\mathbb{R}^D)^N \rightarrow \mathbb{R} : \Ndvec{k} \mapsto e^{|\Ndvec{k}|_{\infty,1}}.
\end{equation}

Many-particle Sobolev spaces of general dominating mixed smoothness \cite{griebel2010tensor,hamaekers2009tensor} can then be defined with the help of the product weight function
$w_{\text{mix}}^t w_{\text{iso}}^r$ as follows:
\begin{dfntn}
  For $-\infty < t, r < \infty$, $D \in \mathbb{N}$, $N \in \mathbb{N}$ we set
  \begin{equation*}
    \mathcal{H}^{t,r}_{\text{\emph{mix}}}((\mathbb{R}^D)^N) := \mathcal{H}_{w_{\text{\emph{mix}}}^t w_{\text{\emph{iso}}}^r}((\mathbb{R}^D)^N).
  \end{equation*}
  with the corresponding norm
  \begin{equation*}
    \| f \|_{\mathcal{H}^{t,r}_{\text{\emph{mix}}}} :=  \| f \|_{\mathcal{H}_{w_{\text{\emph{mix}}}^t w_{\text{\emph{iso}}}^r}}
    .
  \end{equation*}
\end{dfntn}
Analogously, many-particle Sobolev spaces with exponential weights can be defined with the help of the weight function $w_{\text{exp}}$ by:
\begin{dfntn}
  With (\ref{equ:Expw}) and for $s < \infty$, $D \in \mathbb{N}$, $N \in \mathbb{N}$ we set
  \begin{equation*}
    \mathcal{H}^{s}_{\text{\emph{exp}}}((\mathbb{R}^D)^N) := \mathcal{H}_{w_{\text{\emph{exp}}}^s} ((\mathbb{R}^D)^N),
  \end{equation*}
  with the corresponding norm
  \begin{equation*}
    \| f \|_{\mathcal{H}^{s}_{\text{\emph{exp}}}} :=  \| f \|_{\mathcal{H}_{w_{\text{\emph{exp}}}^s}} = \sqrt{\int_{(\mathbb{R}^D)^N}
    \left(w_{\text{\emph{exp}}}(\Ndvec{k})\right)^{2s}|\hat{f}(\Ndvec{k})|^2 \Intd{\Ndvec{k}}}
    \, .
  \end{equation*}
\end{dfntn}

It is well-known that the spatial decay of a function is related to the regularity of its Fourier transform. Thus, we introduce many-particle Sobolev spaces with decay in position space in the following way:
\begin{dfntn}\label{def:GeneralSparseGridSpace}
  For positive and continuous weight functions $w : (\mathbb{R}^D)^N \mapsto \mathbb{R}_+$ and $\tilde{w} : (\mathbb{R}^D)^N \mapsto \mathbb{R}_+$
  we define the function space
  \begin{equation*}
    \mathcal{H}_{w, \tilde{w}}((\mathbb{R}^D)^N) :=
    \left\{ f \in \mathcal{L}^2((\mathbb{R}^D)^N)
        \,:\,  f \in  \mathcal{H}_{w}((\mathbb{R}^D)^N) \text{ and }
      \hat{f} \in \mathcal{H}_{\tilde{w}}((\mathbb{R}^D)^N)  \right\}
      ,
  \end{equation*}
  with a Hermitian inner product
  \begin{equation*}
    \langle f \,,\,
    g\rangle_{\mathcal{H}_{w,\tilde{w}}} := \frac{1}{2}\left(\langle
    f\,,\, g\rangle_{\mathcal{H}_{w}} +\langle \hat{f}\,,\,
    \hat{g}\rangle_{\mathcal{H}_{\tilde{w}}}\right)
  \end{equation*}
and corresponding norm
\begin{equation*}
\| f \|_{\mathcal{H}_{w,\tilde{w}}}^2
:= \frac{1}{2}\langle f \,,\,
f\rangle_{\mathcal{H}_{w,\tilde{w}}}=\frac{1}{2}\left(\| f
\|_{\mathcal{H}_{w}}^2+\| \hat{f}
\|_{\mathcal{H}_{\tilde{w}}}^2\right)
.
\end{equation*}
\end{dfntn}

In view of the known exponential decay in position space of eigenfunctions of the discrete spectrum of the
electronic Schr\"odinger equation, we focus on many-particle Sobolev spaces with exponential spatial decay given by:
\begin{dfntn}\label{def:GeneralSobolevSpaceExp}
    For  $0\leq t^{\prime}+r^{\prime} < t+r$, $t-t^{\prime} \geq 0$ and
  $\hat{s}>0$ we choose in Definition \ref{def:GeneralSparseGridSpace} the weight functions
    \begin{equation}
    w := w_\text{mix}^t w_\text{iso}^r, \quad \tilde{w} := w_{\text{exp}}^{\hat{s}}.
\end{equation}
and set
    \begin{equation*}
    \mathcal{H}^{t,r;\hat{s}}_{\text{\emph{mix-exp}}}((\mathbb{R}^D)^N) := \mathcal{H}_{w_\text{mix}^t w_\text{iso}^r, w_{\text{exp}}^{\hat{s}}}((\mathbb{R}^D)^N).
    \end{equation*}
\end{dfntn}

\subsection{Approximation by general sparse grid spaces}
In the following we discuss an approximation of functions in Sobolev spaces of dominating mixed smoothness with exponential spatial decay, i.e.\  $f \in \mathcal{H}^{t,r;\hat{s}}_{\text{\emph{mix-exp}}}((\mathbb{R}^D)^N)$ as given in Definition \ref{def:GeneralSobolevSpaceExp}, where we measure the error in the $\mathcal{H}_{\text{mix}}^{t^\prime,r^\prime}((\mathbb{R}^D)^N)$-norm with appropriate $t^\prime$ and $r^\prime$. This way, we obtain an approximation error estimate in the $\mathcal{H}_{\text{mix}}^{t^\prime,r^\prime}((\mathbb{R}^D)^N)$-norm for functions in $\mathcal{H}^{t,r;\hat{s}}_{\text{\emph{mix-exp}}}((\mathbb{R}^D)^N)$.
We proceed as follows: First we introduce a Meyer wavelet type characterization of the Sobolev space $\mathcal{H}_{w,\tilde{w}}((\mathbb{R}^D)^N)$ with appropriate weight functions $w$ and $\tilde{w}$. In this construction, we use a smooth Meyer wavelet family to ensure the required dyadic partition and decay properties.
Then we derive a general approximation error estimate in Lemma \ref{lem:MeyerWaveletApproximationError} and, in particular, a specific approximation error estimate for $f \in \mathcal{H}^{t,r;\hat{s}}_{\text{\emph{mix-exp}}}((\mathbb{R}^D)^N)$ while measuring the approximation error in the $\mathcal{H}_{\text{mix}}^{t^\prime,r^\prime}((\mathbb{R}^D)^N) $-norm. For this specific case, we give an upper bound for the involved degrees of freedom and, with it, finally an $\varepsilon$-complexity estimate.

\subsubsection{Meyer wavelet characterization}
For the concrete smooth one-dimensional Meyer wavelet family used as a starting point, including the Fourier-side definitions and the partition-of-unity construction, see Appendix \ref{app:MeyerWaveletFamily} and in particular Definition \ref{def:MeyerWaveletFamily} and equations (\ref{equ:MeyerWaveletFamilynuPhi})--(\ref{equ:MeyerWaveletFamilyxi}).
Our construction proceeds in three steps:

We first start from the one-dimensional Meyer wavelet system, compare also Appendix \ref{app:MeyerWaveletFamily}. In a second step, we proceed to particle spaces in dimension $D$ by an isotropic $D$-dimensional construction, where all spatial directions of a single particle are treated on the same scale. In a third step, we build the many-particle basis by an anisotropic tensor-product construction across particles, which allows for different levels of resolution in different particle coordinates.

To this end, let $\{\psi_{l,\dvec{j}}\}_{l\in\mathbb{N}, \dvec{j}\in\mathbb{Z}^D}$ be an isotropic Meyer wavelet type
orthonormal basis set for $\mathcal{L}^2(\mathbb{R}^D)$, where
we assume that for
\begin{equation*}
\varrho_l := \hat{\psi}_{l,\dvec{0}} \quad \text{ and } \quad \eta_l := \varrho^*_l\varrho_l
\end{equation*}
the family $\{\eta_l\}_{l\in \mathbb{N}}$ forms an underlying smooth dyadic partition of unity, i.e.\
\begin{equation*}
  \sum_{l \in \mathbb{N}} \eta_{l}(\dvec{k}) = 1, \quad \dvec{k} \in \mathbb{R}^D,
\end{equation*}
while the support of $\varrho_l$ and hence $\eta_l$ is given by the domain
$Q_1$ for $l=1$, by the domain $Q_2$ for $l=2$ and by the domain $Q_l\setminus Q_{l-2}$ for $l>2$ with
\begin{equation*}
Q_{l} := \left\{ \dvec{k} \in \mathbb{R}^D \,:\, |\dvec{k}|_\infty \leq 2^{l} \right\} \text{ for } l \in \mathbb{N}.
\end{equation*}
This support/overlap structure is the tensorized and rescaled form of the one-dimensional Meyer construction from Appendix \ref{app:MeyerWaveletFamily}; see Remark \ref{rem:MeyerNotationBridge} and Lemma \ref{lem:MeyerDyadicSupportOverlap}.

An orthonormal Meyer wavelet type basis set for many-particle spaces
can then be obtained by the tensor product construction
\begin{equation*}
\psi_{\Nvec{l},\Ndvec{j}} := \bigotimes_{p=1}^N \psi_{l_p,\dvec{j}_p}, \quad \Nvec{l} \in \mathbb{N}^N, \Ndvec{j} \in (\mathbb{Z}^D)^N
\end{equation*}
and accordingly
\begin{equation*}
  \varrho_{\Nvec{l}} := \bigotimes_{p=1}^N \varrho_{l_p}, \quad \Nvec{l}
  \in \mathbb{N}^N
  \quad \text{ and } \quad
   \eta_{\Nvec{l}} := \bigotimes_{p=1}^N \eta_{l_p}, \quad \Nvec{l} \in \mathbb{N}^N.
\end{equation*}
This way,  each $f\in \mathcal{L}^2\left(\left(\mathbb{R}^D\right)^N\right)$ can be represented with $c_{\Nvec{l},\Ndvec{j}} := \langle \psi_{\Nvec{l},\Ndvec{j}} \,,\, f \rangle$ in the series expansion
\begin{equation*}
f = \sum_{\Nvec{l}\in \mathbb{N},\Ndvec{j} \in (\mathbb{Z}^D)^N} c_{\Nvec{l},\Ndvec{j}}\psi_{\Nvec{l},\Ndvec{j}}
.
\end{equation*}
Due to the orthonormality $\langle \psi_{\Nvec{l},\Ndvec{j}}, \psi_{\Nvec{l}',\Ndvec{j}'} \rangle = \delta_{\Nvec{l},\Nvec{l}'} \delta_{\Ndvec{j},\Ndvec{j}'}$, the $\mathcal{L}^2$-norm of $f$ is given by
\begin{equation*}
\|f\|_{\mathcal{L}^2}^2 = \sum_{\Nvec{l}\in \mathbb{N},\Ndvec{j} \in (\mathbb{Z}^D)^N} |c_{\Nvec{l},\Ndvec{j}}|^2.
\end{equation*}
Moreover, using Parseval's identity, the Fourier transform $\hat{f}$ of $f$ can be represented by the same coefficients $c_{\Nvec{l},\Ndvec{j}} = \langle \hat{\psi}_{\Nvec{l},\Ndvec{j}} \,,\, \hat{f} \rangle = \langle \psi_{\Nvec{l},\Ndvec{j}} \,,\, f \rangle$ as
\begin{equation*}
\hat{f} = \sum_{\Nvec{l}\in \mathbb{N},\Ndvec{j} \in (\mathbb{Z}^D)^N} c_{\Nvec{l},\Ndvec{j}}\hat{\psi}_{\Nvec{l},\Ndvec{j}},
\end{equation*}
and the relation
\begin{equation*}
  \hat{\psi}_{\Nvec{l},\Ndvec{j}}(\Ndvec{k}) = e^{2\pi i \sum_{p=1}^N 2^{-l_p} \dvec{j}_p^T \dvec{k}_p}\varrho_{\Nvec{l}}(\Ndvec{k})
\end{equation*}
holds. Here, $\hat{\psi}_{\Nvec{l},\Ndvec{j}}$ is band-limited to the product domain $Q_{\Nvec{l}}$ in Fourier space, where
\begin{equation*}
Q_{\Nvec{l}} := \left\{\Ndvec{k} \in (\mathbb{R}^D)^N : \dvec{k}_p \in Q_{l_p}, p=1,\ldots,N\right\}.
\end{equation*}

\begin{lmm}\label{lem:LevelEquivalnce}
 For $f \in \mathcal{L}^2((\mathbb{R}^D)^N)$ and for $\Nvec{l} \in \mathbb{N}^N$ it follows
 \begin{equation*}
\int_{(\mathbb{R}^D)^N} |\hat{f}(\Ndvec{k})|^2 \eta_{\Nvec{l}}(\Ndvec{k}) \Intd{\Ndvec{k}} = \sum_{\Ndvec{j}\in(\mathbb{Z}^D)^N} |c_{\Nvec{l},\Ndvec{j}}|^{2}.
 \end{equation*}
 \begin{proof}
  Fix $\Nvec{l}\in\mathbb{N}^N$ and define
  $g_{\Nvec{l}}(\Ndvec{k}) := \varrho_{\Nvec{l}}(\Ndvec{k})\hat{f}(\Ndvec{k})$.
  Since $\eta_{\Nvec{l}}=\varrho_{\Nvec{l}}^*\varrho_{\Nvec{l}}$, we obtain
  \begin{equation*}
    \|g_{\Nvec{l}} \|_{\mathcal{L}^2}^2
    = \int_{(\mathbb{R}^D)^N}|\hat{f}(\Ndvec{k})|^2\eta_{\Nvec{l}}(\Ndvec{k})\,\Intd{\Ndvec{k}}.
  \end{equation*}
  By the support property of $\varrho_{\Nvec{l}}$ (cf.\ Lemma~\ref{lem:MeyerDyadicSupportOverlap}),
  the function $g_{\Nvec{l}}$ is supported in $Q_{\Nvec{l}}$.
  Identifying $Q_{\Nvec{l}}$ with the corresponding torus, Parseval's identity yields
  \begin{equation*}
    \|g_{\Nvec{l}}\|_{\mathcal{L}^2}^2
    = \sum_{\Ndvec{j}\in(\mathbb{Z}^D)^N}
      |\langle g_{\Nvec{l}},e_{\Nvec{l},\Ndvec{j}}\rangle|^2,
  \end{equation*}
  where
  $e_{\Nvec{l},\Ndvec{j}}(\Ndvec{k}) := e^{2\pi i\sum_{p=1}^N2^{-l_p}\dvec{j}_p^T\dvec{k}_p}$.
  Using
  $\hat{\psi}_{\Nvec{l},\Ndvec{j}}=e_{\Nvec{l},\Ndvec{j}}\varrho_{\Nvec{l}}$
  (cf.\ Remark~\ref{rem:MeyerNotationBridge}), we obtain
  $c_{\Nvec{l},\Ndvec{j}}=\langle g_{\Nvec{l}},e_{\Nvec{l},\Ndvec{j}}\rangle$.
  Combining the previous relations proves the assertion.
\end{proof}
\end{lmm}

\begin{lmm}\label{lem:MeyerWaveletCharacterizationReal}
  Let $f \in \mathcal{H}_w((\mathbb{R}^D)^N)$ with weight function $w$ such
  that there exists $C_w \geq 1$ where for all $\Nvec{l} \in \mathbb{N}^N$ and all $\Ndvec{k}, \Ndvec{k}' \in \text{supp} \, \eta_{\Nvec{l}}$ it holds
  \begin{equation}\label{equ:DyadicWeightEquivalence}
    C_w^{-1} \leq \frac{w(\Ndvec{k})}{w(\Ndvec{k}')} \leq C_w.
  \end{equation}
  Assume furthermore that $2^{\Nvec{l}} \in \text{supp}\,\eta_{\Nvec{l}}$ for all $\Nvec{l}\in\mathbb{N}^N$.
  Then, the norm of $f$ in $\mathcal{H}_w((\mathbb{R}^D)^N)$ can be characterized by the Meyer wavelet coefficients $c_{\Nvec{l},\Ndvec{j}} = \langle \psi_{\Nvec{l},\Ndvec{j}} \,,\, f \rangle$ as
  \begin{equation*}
\|f\|_{\mathcal{H}_{w}}^{2}\simeq\sum_{\Nvec{l}\in\mathbb{N}^N} w(2^{\Nvec{l}})^{2} \sum_{\Ndvec{j}\in(\mathbb{Z}^D)^N} |c_{\Nvec{l},\Ndvec{j}}|^{2}.
\end{equation*}

\begin{proof}
By the partition of unity $\sum_{\Nvec{l}}\eta_{\Nvec{l}}=1$
(cf.\ Lemma~\ref{lem:MeyerDyadicSupportOverlap}) and the definition of
$\|\cdot\|_{\mathcal{H}_w}$, we have
\begin{equation*}
  \|f\|_{\mathcal{H}_w}^2
  = \sum_{\Nvec{l}\in\mathbb{N}^N}
    \int_{(\mathbb{R}^D)^N} w(\Ndvec{k})^2|\hat{f}(\Ndvec{k})|^2
    \eta_{\Nvec{l}}\,(\Ndvec{k})\,\Intd{\Ndvec{k}}.
\end{equation*}
With the dyadic comparability condition~(\ref{equ:DyadicWeightEquivalence})
and since $2^{\Nvec{l}}\in\mathrm{supp}\,\eta_{\Nvec{l}}$, one has
$w(\Ndvec{k})^2\simeq w(2^{\Nvec{l}})^2$ uniformly for
$\Ndvec{k}\in\mathrm{supp}\,\eta_{\Nvec{l}}$;
see Lemma~\ref{lem:MeyerWeightConstants} for admissible constants.
By applying Lemma~\ref{lem:LevelEquivalnce}, we obtain
\begin{equation*}
    \begin{split}
      \sum_{\Nvec{l}\in\mathbb{N}^N} \int_{(\mathbb{R}^D)^N} (w(\Ndvec{k}))^2 |\hat{f}(\Ndvec{k})|^2 \eta_{\Nvec{l}}(\Ndvec{k}) \Intd{\Ndvec{k}}
      &\simeq \sum_{\Nvec{l}\in\mathbb{N}^N} w(2^{\Nvec{l}})^{2} \int_{(\mathbb{R}^D)^N} |\hat{f}(\Ndvec{k})|^2 \eta_{\Nvec{l}}(\Ndvec{k}) \Intd{\Ndvec{k}}\\
      &= \sum_{\Nvec{l}\in\mathbb{N}^N} w(2^{\Nvec{l}})^{2} \sum_{\Ndvec{j}\in(\mathbb{Z}^D)^N} |c_{\Nvec{l},\Ndvec{j}}|^{2}.
    \end{split}
  \end{equation*}
  This proves the claim.
\end{proof}
\end{lmm}

\begin{lmm}\label{lem:MeyerWaveletCharacterizationFourier}
 Let $\hat{f} \in \mathcal{H}_{\tilde{w}}((\mathbb{R}^D)^N)$ with a weight function $\tilde{w}$ where
  $\tilde{w}$ is even, i.e. $\tilde{w}(-\Ndvec{x})=\tilde{w}(\Ndvec{x})$ for all $\Ndvec{x}\in(\mathbb{R}^D)^N$. Furthermore assume that
  there exists $C_{\tilde{w}} \geq 1$
   such that for all $\Ndvec{x},\Ndvec{y} \in (\mathbb{R}^D)^N$ with $\|\Ndvec{x}-\Ndvec{y}\|_\infty \leq 1$ it holds
  \begin{equation}\label{equ:LocalWeightEquivalence}
    C_{\tilde{w}}^{-1} \leq \frac{\tilde{w}(\Ndvec{x})}{\tilde{w}(\Ndvec{y})} \leq C_{\tilde{w}}.
  \end{equation}
  Then, the norm of $\hat{f}$ in $\mathcal{H}_{\tilde{w}}((\mathbb{R}^D)^N)$ can be characterized by the Meyer wavelet coefficients $c_{\Nvec{l},\Ndvec{j}} = \langle \psi_{\Nvec{l},\Ndvec{j}} \,,\, f \rangle$ as
  \begin{equation*}
\|\hat{f}\|_{\mathcal{H}_{\tilde{w}}}^{2}\simeq\sum_{\Nvec{l}\in\mathbb{N}^N}\sum_{\Ndvec{j}\in(\mathbb{Z}^D)^N}\tilde{w}(2^{-\Nvec{l}}\Ndvec{j})^{2}\,|c_{\Nvec{l},\Ndvec{j}}|^{2}.
\end{equation*}
\begin{proof}
Since $\widehat{(\hat{f})}(\Ndvec{k})=f(-\Ndvec{k})$ and $\tilde{w}$ is even,
Definition~\ref{def:wSobolevParticles} applied to $\hat{f}$ gives
\begin{equation*}
  \|\hat{f}\|_{\mathcal{H}_{\tilde{w}}}^2
  = \|\tilde{w}\,f\|_{\mathcal{L}^2((\mathbb{R}^D)^N)}^2.
\end{equation*}
Now setting $f_{\Nvec{l}}:=\mathcal{F}^{-1}[\varrho_{\Nvec{l}}\hat{f}]$,
the two-step equivalence
\begin{equation}\label{equ:MeyerWaveletCharacterizationCombinedProof}
  \|\tilde{w}\,f\|_{\mathcal{L}^2}^2
  \;\simeq\;
  \sum_{\Nvec{l}\in\mathbb{N}^N}\|\tilde{w}\,f_{\Nvec{l}}\|_{\mathcal{L}^2}^2
  \;\simeq\;
  \sum_{\Nvec{l}\in\mathbb{N}^N}\sum_{\Ndvec{j}\in(\mathbb{Z}^D)^N}
  \tilde{w}(2^{-\Nvec{l}}\Ndvec{j})^2\,|c_{\Nvec{l},\Ndvec{j}}|^2
\end{equation}
follows, where the first equivalence is obtained from Lemma~\ref{lem:WeightedLP} and the second equivalence is obtained from Lemma~\ref{lem:WeightedAO}, both under the local comparability assumption~(\ref{equ:LocalWeightEquivalence}). This proves the claim.
\end{proof}
\end{lmm}
Note that in (\ref{equ:MeyerWaveletCharacterizationCombinedProof}) the interplay of the spatial resolution level index $\Nvec{l}$ and the spatial location index $\Ndvec{j}$ of the Meyer wavelet $\psi_{\Nvec{l},\Ndvec{j}}$ is expressed via the spatial location $2^{-\Nvec{l}}\Ndvec{j}$.

\begin{lmm}\label{lem:MeyerWaveletCharacterization}
  Let $f \in \mathcal{H}_{w,\tilde{w}}((\mathbb{R}^D)^N)$ with weight functions $w$ and $\tilde{w}$ that fulfill the assumptions in Lemmata \ref{lem:MeyerWaveletCharacterizationReal} and \ref{lem:MeyerWaveletCharacterizationFourier}, respectively. Then, the norm of $f$ in $\mathcal{H}_{w,\tilde{w}}((\mathbb{R}^D)^N)$ can be characterized by the Meyer wavelet coefficients $c_{\Nvec{l},\Ndvec{j}} = \langle \psi_{\Nvec{l},\Ndvec{j}} \,,\, f \rangle$ as
\begin{equation}\label{equ:MeyerWaveletCharacterizationCombined}
\|f\|_{\mathcal{H}_{w,\tilde{w}}}^{2}\simeq\sum_{\Nvec{l}\in\mathbb{N}^N}\sum_{\Ndvec{j}\in(\mathbb{Z}^D)^N}\bigl(w(2^{\Nvec{l}})^{2}+\tilde{w}(2^{-\Nvec{l}}\Ndvec{j})^{2}\bigr)\,|c_{\Nvec{l},\Ndvec{j}}|^{2}.
\end{equation}
\begin{proof}
  By the definition of the norm in $\mathcal{H}_{w,\tilde{w}}((\mathbb{R}^D)^N)$ and with the
  Lemmata~\ref{lem:MeyerWaveletCharacterizationReal}
  and~\ref{lem:MeyerWaveletCharacterizationFourier}, we obtain
  \begin{align*}
      \|f\|_{\mathcal{H}_{w,\tilde{w}}}^{2}
      &= \frac{1}{2}\left(\|f\|_{\mathcal{H}_{w}}^2 + \|\hat{f}\|_{\mathcal{H}_{\tilde{w}}}^2\right)\notag\\
      &\simeq \sum_{\Nvec{l}\in\mathbb{N}^N} w(2^{\Nvec{l}})^{2} \sum_{\Ndvec{j}\in(\mathbb{Z}^D)^N} |c_{\Nvec{l},\Ndvec{j}}|^{2} + \sum_{\Nvec{l}\in\mathbb{N}^N}\sum_{\Ndvec{j}\in(\mathbb{Z}^D)^N}\tilde{w}(2^{-\Nvec{l}}\Ndvec{j})^{2}\,|c_{\Nvec{l},\Ndvec{j}}|^{2}\notag\\
      &\simeq \sum_{\Nvec{l}\in\mathbb{N}^N}\sum_{\Ndvec{j}\in(\mathbb{Z}^D)^N}\bigl(w(2^{\Nvec{l}})^{2}+\tilde{w}(2^{-\Nvec{l}}\Ndvec{j})^{2}\bigr)\,|c_{\Nvec{l},\Ndvec{j}}|^{2}.
  \end{align*}
  Hence the norm characterization (\ref{equ:MeyerWaveletCharacterizationCombined}) follows.
\end{proof}
\end{lmm}
Note that the characterization (\ref{equ:MeyerWaveletCharacterizationCombined}) of the norm in $\mathcal{H}_{w,\tilde{w}}((\mathbb{R}^D)^N)$ by Meyer wavelet coefficients resembles a weighted $\ell^2$-norm of the coefficients, where the weights are given by the sum of the weight functions $w$ and $\tilde{w}$ evaluated at the spatial resolution level $2^{\Nvec{l}}$ and the spatial location $2^{-\Nvec{l}}\Ndvec{j}$, respectively.

\subsubsection{Meyer wavelet approximation}
In the following we discuss the approximation of a function in $\mathcal{H}_{w,\tilde{w}}((\mathbb{R}^D)^N)$ by Meyer wavelet type functions in the norm of $\mathcal{H}_{w^\prime,\tilde{w}^\prime}((\mathbb{R}^D)^N)$ for appropriate weight functions $w^\prime$ and $\tilde{w}^\prime$. To this end, we assume that the weight functions $w,w^\prime$ and $\tilde{w},\tilde{w}^\prime$ are related in a way that allows one to control the approximation error by the Meyer wavelet coefficients of $f$ in the norm of $\mathcal{H}_{w,\tilde{w}}((\mathbb{R}^D)^N)$.
Note here that besides smooth Meyer wavelets, several other wavelet families could be used to obtain similar approximation results, as long as the wavelet family allows for a characterization of the norm in $\mathcal{H}_{w,\tilde{w}}((\mathbb{R}^D)^N)$ by the wavelet coefficients as in Lemma \ref{lem:MeyerWaveletCharacterization}. The following lemma gives a general result on the approximation error of smooth Meyer wavelet functions under the above assumptions on the weight functions.

\begin{lmm}\label{lem:MeyerWaveletApproximationError}
  Let $w$ and $w^\prime$ be weight functions
  that fulfill the assumptions in Lemma \ref{lem:MeyerWaveletCharacterizationReal} and let $\tilde{w}$ and $\tilde{w}^\prime$ be weight functions
  that fulfill the assumptions in Lemma \ref{lem:MeyerWaveletCharacterizationFourier}. Furthermore, assume that there exists a family $\mathcal{J} := \{\mathcal{J}_L\}_{L\in\mathbb{N}}$ of finite subsets $\mathcal{J}_L \subset \mathbb{N}^N\times(\mathbb{Z}^D)^N$ such that $\bigcup_{L\in\mathbb{N}} \mathcal{J}_L = \mathbb{N}^N\times(\mathbb{Z}^D)^N$ and there holds
  \begin{equation}\label{equ:MeyerWaveletApproximationTheta}
    \sup_{(\Nvec{l},\Ndvec{j}) \in \overline{\mathcal{J}_L}}
    \frac
    {w^\prime(2^{\Nvec{l}})^2+\tilde{w}^\prime(2^{-\Nvec{l}}\Ndvec{j})^2}
    {w(2^{\Nvec{l}})^2+\tilde{w}(2^{-\Nvec{l}}\Ndvec{j})^2} \lesssim \Theta(L)^2,
  \end{equation}
where $\overline{\mathcal{J}_L} :=
    (\mathbb{N}\times\mathbb{Z}^D)^N \setminus \mathcal{J}_L$ and $\Theta: \mathbb{N} \mapsto \mathbb{R}_+$ is a monotone decreasing function with $\Theta(L) \rightarrow 0$ for $L\rightarrow\infty$.
Then, for $f \in \mathcal{H}_{w,\tilde{w}}((\mathbb{R}^D)^N)$ we have the relation
\begin{equation}\label{equ:MeyerWaveletApproximationSpaceEstimate}
\inf_{\myg \in V_{\mathcal{J}_L}} \|f-\myg\|_{\mathcal{H}_{w^\prime,\tilde{w}^\prime}}
\leq \|f-f_{\mathcal{J}_L}\|_{\mathcal{H}_{w^\prime,\tilde{w}^\prime}}\\
  \lesssim \Theta(L) \|f\|_{\mathcal{H}_{w,\tilde{w}}},
\end{equation}
where
\begin{equation*}
  V_{\mathcal{J}_L} := \text{span}\{\psi_{\Nvec{l},\Ndvec{j}}  \, : \, (\Nvec{l},\Ndvec{j}) \in \mathcal{J}_L\}
\end{equation*}
  and
\begin{equation}\label{equ:MeyerWaveletApproximationPojection}
  f_{\mathcal{J}_L} := \sum_{(\Nvec{l},\Ndvec{j}) \in \mathcal{J}_L} c_{\Nvec{l},\Ndvec{j}} \psi_{\Nvec{l},\Ndvec{j}}, \quad c_{\Nvec{l},\Ndvec{j}} := \langle \psi_{\Nvec{l},\Ndvec{j}} \,,\, f \rangle.
\end{equation}
\begin{proof}
  Using assumption~(\ref{equ:MeyerWaveletApproximationTheta}),
  projection~(\ref{equ:MeyerWaveletApproximationPojection}), and
  Lemma~\ref{lem:MeyerWaveletCharacterization}, we have
  \begin{equation*}
    \begin{split}
      \|f-f_{\mathcal{J}_L}\|_{\mathcal{H}_{w^\prime,\tilde{w}^\prime}}^{2}
      & = \|\sum_{(\Nvec{l},\Ndvec{j}) \in \overline{\mathcal{J}_L}} c_{\Nvec{l},\Ndvec{j}} \psi_{\Nvec{l},\Ndvec{j}}\|_{\mathcal{H}_{w^\prime,\tilde{w}^\prime}}^{2}\\
      &\simeq \sum_{(\Nvec{l},\Ndvec{j}) \in \overline{\mathcal{J}_L}} (w^\prime(2^{\Nvec{l}})^2+\tilde{w}^\prime(2^{-\Nvec{l}}\Ndvec{j})^2)\,|c_{\Nvec{l},\Ndvec{j}}|^{2}\\
      &= \sum_{(\Nvec{l},\Ndvec{j}) \in \overline{\mathcal{J}_L}} \frac{w^\prime(2^{\Nvec{l}})^2+\tilde{w}^\prime(2^{-\Nvec{l}}\Ndvec{j})^2}{w(2^{\Nvec{l}})^2+\tilde{w}(2^{-\Nvec{l}}\Ndvec{j})^2} (w(2^{\Nvec{l}})^2+\tilde{w}(2^{-\Nvec{l}}\Ndvec{j})^2)\,|c_{\Nvec{l},\Ndvec{j}}|^{2}\\
      &\lesssim \Theta(L)^2 \sum_{(\Nvec{l},\Ndvec{j}) \in \overline{\mathcal{J}_L}} (w(2^{\Nvec{l}})^2+\tilde{w}(2^{-\Nvec{l}}\Ndvec{j})^2)\,|c_{\Nvec{l},\Ndvec{j}}|^{2}\\
      &\lesssim \Theta(L)^2 \|f\|_{\mathcal{H}_{w,\tilde{w}}}^{2}
      .
    \end{split}
  \end{equation*}
  Taking square roots yields
  (\ref{equ:MeyerWaveletApproximationSpaceEstimate}); the first inequality
  in that relation is immediate from the definition of the infimum.
\end{proof}
\end{lmm}

\subsubsection{Hyperbolic cross index construction}\label{sec:HyperbolicCrossApproximation}
Now we discuss the specific case of the approximation of a function in $\mathcal{H}^{t,r;\hat{s}}_{\text{\emph{mix-exp}}}((\mathbb{R}^D)^N)$, while measuring the error in the $\mathcal{H}_{\text{mix}}^{t^\prime,r^\prime}((\mathbb{R}^D)^N)$-norm. To this end, we apply Lemma \ref{lem:MeyerWaveletApproximationError} with appropriate choices of the weight functions $w,w^\prime$ and $\tilde{w},\tilde{w}^\prime$ and with an appropriate choice of the family $\mathcal{J} = \mathcal{J}^T := \{\mathcal{J}^T_\myell\}_{\myell\in\mathbb{N}}$ of index sets for a given parameter $T<1$.

Let us first note
that, for $t\geq0$ and $r\geq0$, the weight function $w_{\text{mix}}^tw_{\text{iso}}^r$ associated with the many-particle Sobolev space of dominating mixed smoothness $\mathcal{H}^{t,r}_{\text{\emph{mix}}}((\mathbb{R}^D)^N)$ fulfills the assumptions of Lemma \ref{lem:MeyerWaveletCharacterizationReal} according to the definitions (\ref{equ:HyperbolicCrossTwiso}) and
(\ref{equ:HyperbolicCrossTwmix}); see Lemma \ref{lem:MeyerWeightConstants} in Appendix \ref{app:AuxiliaryLemmata} for explicit admissible constants.
Likewise, the weight function  $w_{\text{exp}}^{\hat{s}}$ associated with the many-particle Sobolev space $\mathcal{H}^{\hat{s}}_{\text{\emph{exp}}}((\mathbb{R}^D)^N)$ fulfills  the assumptions of Lemma \ref{lem:MeyerWaveletCharacterizationFourier} according to definition (\ref{equ:Expw}); cf. Lemma \ref{lem:MeyerWeightConstants} in Appendix \ref{app:AuxiliaryLemmata}.
Hence, according to the definition of the many-particle Sobolev space $\mathcal{H}^{t,r;\hat{s}}_{\text{\emph{mix-exp}}}((\mathbb{R}^D)^N)$,
the Meyer wavelet characterization of the norm in $\mathcal{H}^{t,r;\hat{s}}_{\text{\emph{mix-exp}}}((\mathbb{R}^D)^N)$ given in Lemma \ref{lem:MeyerWaveletCharacterization} can be applied to this space.

Now, let us consider for  $T<1$ the family $\mathcal{J}^T := \{\mathcal{J}^T_\myell\}_{\myell\in\mathbb{N}}$ of finite subsets $\mathcal{J}^T_\myell \subset \mathbb{N}^N\times(\mathbb{Z}^D)^N$ defined by
\begin{equation}\label{equ:GeneralSparseGridSeriesSetJlT}
  \mathcal{J}_\myell^T := \left\{ (\Nvec{l},\Ndvec{j}) \in \mathbb{N}^N\times(\mathbb{Z}^D)^N \, : \, \Nvec{l}\in\mathcal{I}_\myell^T, \Ndvec{j} \in \mathcal{J}_\myell^{\Nvec{l}} \right\},
\end{equation}
where the index sets $\mathcal{I}_L^T$ are given like in
\cite{Griebel2000} as
\begin{equation}\label{equ:HyperbolicCrossTIndex}
    \mathcal{I}_L^T := \left\{ \Nvec{l} \in \mathbb{N}^N \,:\, |\Nvec{l}|_1 - T|\Nvec{l}|_\infty \leq L(1-T) + (N-1)\right\},
  \end{equation}
  with the natural extension to the case of $T\rightarrow-\infty$ by
  \begin{equation}\label{equ:HyperbolicCrossTFull}
    \mathcal{I}_L^{-\infty} := \left\{ \Nvec{l} \in \mathbb{N}^N \,:\, |\Nvec{l}|_\infty \leq L\right\}
    .
    \end{equation}
Here, for a given $L\in\mathbb{N}$, the index sets $\mathcal{J}_L^{\Nvec{l}}$ of the family $\{\mathcal{J}_\myell^{\Nvec{l}}\}_{\Nvec{l}\in\mathbb{N}^N}$ are defined by
\begin{equation}\label{equ:GeneralSparseGridSeriesSetJl}
  \mathcal{J}_L^{\Nvec{l}} := \bigcup_{\Nvec{\myiota}\in\mathcal{I}_{R(L)}^{-\infty}} \mathcal{Z}_{\Nvec{\myiota}2^{\Nvec{l}+\Nvec{1}}},
\end{equation}
with an appropriately chosen monotonically increasing function $R:\mathbb{N} \to \mathbb{R}_+$ and the index sets
\begin{equation*}
  \mathcal{Z}_{\Nvec{\myiota}2^{\Nvec{l}+\Nvec{1}}} := \mathcal{Z}_{(\myiota_1 2^{l_1+1},\ldots,\myiota_N 2^{l_N+1})} = Z_{\myiota_1 2^{l_1+1}} \times \cdots \times Z_{\myiota_N 2^{l_N+1}},
\end{equation*}
where, for $1 \leq p \leq N$ the index set $Z_{\myiota_p 2^{l_p+1}}$ is defined as
\begin{equation*}
    Z_{\myiota_p 2^{l_p+1}} := \{ \dvec{j} \in \mathbb{Z}^D \, : \, |\dvec{j}|_\infty \leq \myiota_p 2^{l_p+1} \}.
\end{equation*}

Let us remark here that it can be easily deduced from the definitions (\ref{equ:HyperbolicCrossTIndex}) and (\ref{equ:HyperbolicCrossTFull}) that the inclusions
\begin{equation}
  \mathcal{I}_L^{T_1} \subset \mathcal{I}_L^{T_2} \subset \mathcal{I}_L^{-\infty}
  ,
\end{equation}
\begin{equation}\label{equ:HyperbolicCrossTIncB}
  \mathcal{I}_{L}^{T_2} \subset \mathcal{I}_{\lceil\tilde{L}\rceil}^{T_1} \, \text{ with } \, \tilde{L} = L{\frac{(1-T_2)(N-T_1)}{(N-T_2)(1-T_1)}} \quad \text{ and } \quad \mathcal{I}_{L}^{-\infty} \subset \mathcal{I}_{\lceil\tilde{\tilde{L}}\rceil}^{T_1} \, \text{ with } \, \tilde{\tilde{L}} = L{\frac{N-T_1}{1-T_1}}
\end{equation}
hold for $L\geq 1$, $-\infty < T_2 \leq T_1 < 1$. Also, we have the inclusion
$\mathcal{I}_{L_1}^{T} \subset \mathcal{I}_{L_2}^{T}$ for $1 \leq
L_1 \leq L_2$, $-\infty < T < 1$.

Now we can give an upper estimate for the best approximation error in the discretization space
\begin{equation}\label{equ:MeyerWaveletApproximationSpaceJlT}
V_{\mathcal{J}_L^T} := \text{span}\{\psi_{\Nvec{l},\Ndvec{j}}  \, : \, (\Nvec{l},\Ndvec{j}) \in \mathcal{J}_L^T\},
\end{equation}
where the error is measured in the norm of $\mathcal{H}^{t^\prime,r^\prime}_{\text{\emph{mix}}}((\mathbb{R}^D)^N)$ and the function to be approximated is an element of $\mathcal{H}^{t,r;\hat{s}}_{\text{\emph{mix-exp}}}((\mathbb{R}^D)^N)$.
\begin{lmm}\label{lem:GeneralSparseGridFourierSeries}
  Let
  $0\leq t^{\prime}+r^{\prime} < t+r$, $t-t^{\prime} \geq 0$,
  $\hat{s}>0$,
  $f \in
  \mathcal{H}^{t,r;\hat{s}}_{\text{\emph{mix-exp}}}((\mathbb{R}^D)^N)$.
  Furthermore, let $V_{\mathcal{J}_L^T}$ and $f_{{\mathcal{J}_L^T}}$ be as in (\ref{equ:MeyerWaveletApproximationSpaceJlT}) and (\ref{equ:MeyerWaveletApproximationPojection}) with
(\ref{equ:GeneralSparseGridSeriesSetJlT}). Moreover, let either
  $t-t^\prime=0$, $T=-\infty$ or $t-t^\prime>0$, $T \leq \frac{r^\prime-r}{t-t^\prime}$.
  Then it holds
  \begin{equation}\label{equ:MainErrorLemmaA}
    \inf_{\myg \in V_{\mathcal{J}_L^T}} \|f-\myg\|_{\mathcal{H}^{t^{\prime},r^{\prime}}_{\text{\emph{mix}}}} \leq \|f-f_{\mathcal{J}_L^T}\|_{\mathcal{H}^{t^{\prime},r^{\prime}}_{\text{\emph{mix}}}} \lesssim
\left(2^{-L(t+r-(t^\prime+r^\prime))} + 2^{L(t^\prime+r^\prime)}2^{-\hat{s}R(L)}\right)
\| f \|_{\mathcal{H}^{t,r;\hat{s}}_{\text{\emph{mix-exp}}}}.
\end{equation}
In particular, for
\begin{equation}\label{equ:MainErrorLemmaR}
  R(L):=\left\lceil L\,\frac{t+r}{\hat{s}}\right\rceil,
\end{equation}
it holds
\begin{equation}\label{equ:MainErrorLemmaB}
    \inf_{\myg \in V_{\mathcal{J}_L^T}} \|f-\myg\|_{\mathcal{H}^{t^{\prime},r^{\prime}}_{\text{\emph{mix}}}} \leq \|f-f_{\mathcal{J}_{L}^T}\|_{\mathcal{H}^{t^{\prime},r^{\prime}}_{\text{\emph{mix}}}} \lesssim
2^{-L(t+r-(t^\prime+r^\prime))}  \| f \|_{\mathcal{H}^{t,r;\hat{s}}_{\text{\emph{mix-exp}}}}.
\end{equation}
\begin{proof}
  According to the definition of the norm in $\mathcal{H}^{t^{\prime},r^{\prime}}_{\text{\emph{mix}}}((\mathbb{R}^D)^N)$ and with $w=w_{\text{mix}}^t w_{\text{iso}}^r$, $\tilde{w}=w_{\text{exp}}^{\hat{s}}$, $w^\prime=w_{\text{mix}}^{t^\prime} w_{\text{iso}}^{r^\prime}$ and $\tilde{w}^\prime=1$, we have
   $\|f\|_{\mathcal{H}^{t^{\prime},r^{\prime}}_{\text{\emph{mix}}}}
    \simeq
    \|f\|_{\mathcal{H}_{w_{\text{mix}}^{t^\prime} w_{\text{iso}}^{r^\prime}, 1}}$ and $\|f\|_{\mathcal{H}^{t,r;\hat{s}}_{\text{\emph{mix-exp}}}}
    \simeq
    \|f\|_{\mathcal{H}_{w_{\text{mix}}^t w_{\text{iso}}^r, w_{\text{exp}}^{\hat{s}}}}$.
Hence, according to Lemma \ref{lem:MeyerWaveletApproximationError}, we need to determine an upper estimate for (\ref{equ:MeyerWaveletApproximationTheta}), i.e.\
\begin{equation*}
    \sup_{(\Nvec{l},\Ndvec{j}) \in \overline{\mathcal{J}_L^T}}
    \frac
    {w^\prime(2^{\Nvec{l}})^2+\tilde{w}^\prime(2^{-\Nvec{l}}\Ndvec{j})^2}
    {w(2^{\Nvec{l}})^2+\tilde{w}(2^{-\Nvec{l}}\Ndvec{j})^2} \lesssim \Theta(L)^2.
  \end{equation*}
   Here,  the ratio on the left is given for fixed $\Nvec{l}\in\mathbb{N}^N$ by
\begin{equation*}
  \theta_{\Nvec{l}}(\Ndvec{j}) := \frac{w^\prime(2^{\Nvec{l}})^2+1}
  {w(2^{\Nvec{l}})^2+e^{2\hat{s}|2^{-\Nvec{l}}\Ndvec{j}|_{\infty,1}}},
\end{equation*}
where the numerator is constant in $\Ndvec{j}$ and the denominator is increasing with $|\Ndvec{j}|_{\infty,1}$.

  We split the complement into
  \begin{equation*}
    \overline{\mathcal{J}_L^T}=\mathcal{A}_L\cup\mathcal{B}_L,
  \end{equation*}
  where
  \begin{equation*}
    \mathcal{A}_L:=\{(\Nvec{l},\Ndvec{j})\,:\,\Nvec{l}\notin\mathcal{I}_L^T,\ \Ndvec{j}\in(\mathbb{Z}^D)^N\},
    \qquad
    \mathcal{B}_L:=\{(\Nvec{l},\Ndvec{j})\,:\,\Nvec{l}\in\mathcal{I}_L^T,\ \Ndvec{j}\notin\mathcal{J}_L^{\Nvec{l}}\}.
  \end{equation*}

  For $(\Nvec{l},\Ndvec{j})\in\mathcal{A}_L$, we have the bound
  \begin{equation*}
    \theta_{\Nvec{l}}(\Ndvec{j})
    \leq
    \frac{w^\prime(2^{\Nvec{l}})^2+1}{w(2^{\Nvec{l}})^2+1}
    \lesssim
    2^{-2\left((t-t^\prime)|\Nvec{l}|_1+(r-r^\prime)|\Nvec{l}|_\infty\right)}
    \lesssim
    2^{-2L(t+r-(t^\prime+r^\prime))}.
  \end{equation*}
  Here, for $w=w_{\mathrm{mix}}^t w_{\mathrm{iso}}^r$ and
  $w^\prime=w_{\mathrm{mix}}^{t^\prime}w_{\mathrm{iso}}^{r^\prime}$, the second inequality follows from
  $w_{\mathrm{mix}}(2^{\Nvec{l}})^2\simeq 2^{2|\Nvec{l}|_1}$ and
  $w_{\mathrm{iso}}(2^{\Nvec{l}})^2\simeq 2^{2|\Nvec{l}|_\infty}$.
  The last inequality follows from $\Nvec{l}\notin\mathcal{I}_L^T$ together with
  $t-t^\prime=0,\,T=-\infty$ or $t-t^\prime>0,\,T\leq\frac{r^\prime-r}{t-t^\prime}$,
  compare \cite{griebel2007sparse,griebel2010tensor}.

  For $(\Nvec{l},\Ndvec{j})\in\mathcal{B}_L$, by (\ref{equ:GeneralSparseGridSeriesSetJl})
  and $\Nvec{\myiota}\in\mathcal{I}_{R(L)}^{-\infty}\iff |\Nvec{\myiota}|_\infty\leq R(L)$, we have
  $\Ndvec{j}\notin\mathcal{J}_L^{\Nvec{l}} \Rightarrow |2^{-\Nvec{l}}\Ndvec{j}|_{\infty,1}\gtrsim R(L)$, hence
  \begin{equation*}
    e^{2\hat{s}|2^{-\Nvec{l}}\Ndvec{j}|_{\infty,1}}\gtrsim 2^{2\hat{s}R(L)}.
  \end{equation*}
  Therefore,  we obtain
  \begin{equation*}
    \frac{w^\prime(2^{\Nvec{l}})^2+1}{e^{2\hat{s}|2^{-\Nvec{l}}\Ndvec{j}|_{\infty,1}}}
    \lesssim
    \frac{w^\prime(2^{\Nvec{l}})^2+1}{2^{2\hat{s}R(L)}}
    =
    (w^\prime(2^{\Nvec{l}})^2+1)\,2^{-2\hat{s}R(L)}.
  \end{equation*}
  For $-\infty<T<1$, by the inclusion (\ref{equ:HyperbolicCrossTIncB}) with $T_2=T$ and $T_1=0$, we have
  \begin{equation*}
    \mathcal{I}_L^T\subset\mathcal{I}_{\lceil\bar L\rceil}^{0},
    \qquad
    \bar L:=L\frac{(1-T)N}{N-T}.
  \end{equation*}
  Hence for $\Nvec{l}\in\mathcal{I}_L^T$ we have $\Nvec{l}\in\mathcal{I}_{\lceil\bar L\rceil}^{0}$ and therefore, by
  (\ref{equ:HyperbolicCrossTIndex}) with $T=0$, we obtain
  \begin{equation*}
    |\Nvec{l}|_1\le \lceil\bar L\rceil+(N-1),
    \qquad
    |\Nvec{l}|_\infty\le |\Nvec{l}|_1\le \lceil\bar L\rceil+(N-1),
  \end{equation*}
  i.e. $|\Nvec{l}|_\infty\lesssim L$ and $|\Nvec{l}|_1\lesssim L$.
  For $T=-\infty$,  we directly have
  $|\Nvec{l}|_\infty\le L$ and $|\Nvec{l}|_1\le NL$ by (\ref{equ:HyperbolicCrossTFull}). Thus, in all cases, it holds
  $|\Nvec{l}|_\infty\lesssim L$ and $|\Nvec{l}|_1\lesssim L$, and therefore we get
  \begin{equation*}
    \sup_{\Nvec{l}\in\mathcal{I}_L^T}(w^\prime(2^{\Nvec{l}})^2+1)
     =
    \sup_{\Nvec{l}\in\mathcal{I}_L^T}\bigl((w_{\mathrm{iso}}(2^{\Nvec{l}})^{r^\prime}w_{\mathrm{mix}}(2^{\Nvec{l}})^{t^\prime})^2+1\bigr)
    \lesssim
    \sup_{\Nvec{l}\in\mathcal{I}_L^T}\bigl(2^{2r^\prime|\Nvec{l}|_\infty}
    2^{2t^\prime|\Nvec{l}|_1}\bigr) \lesssim
    2^{2L(t^\prime+r^\prime)}
  \end{equation*}
  and hence
  \begin{equation*}
    \sup_{(\Nvec{l},\Ndvec{j})\in\mathcal{B}_L}\theta_{\Nvec{l}}(\Ndvec{j})
    \lesssim
    2^{2L(t^\prime+r^\prime)}2^{-2\hat{s}R(L)}.
  \end{equation*}
  Combining both parts yields
  \begin{equation*}
    \Theta(L)^2
    \lesssim
    2^{-2L(t+r-(t^\prime+r^\prime))}
    +2^{2L(t^\prime+r^\prime)}2^{-2\hat{s}R(L)}.
  \end{equation*}
  Applying Lemma \ref{lem:MeyerWaveletApproximationError} we get
  \begin{equation*}
    \|f-f_{\mathcal{J}_L^T}\|_{\mathcal{H}^{t^\prime,r^\prime}_{\text{\emph{mix}}}}
    \lesssim
    2^{-L(t+r-(t^\prime+r^\prime))}\|f\|_{\mathcal{H}^{t,r;\hat{s}}_{\text{\emph{mix-exp}}}}
    +2^{L(t^\prime+r^\prime)}2^{-\hat{s}R(L)}\|f\|_{\mathcal{H}^{t,r;\hat{s}}_{\text{\emph{mix-exp}}}},
  \end{equation*}
  which implies (\ref{equ:MainErrorLemmaA}).
  For the convenient balancing choice
(\ref{equ:MainErrorLemmaR}) of $R(L)$
  we have
  \begin{equation*}
    2^{L(t^\prime+r^\prime)}2^{-\hat{s}R(L)}
    \lesssim 2^{-L(t+r-(t^\prime+r^\prime))},
  \end{equation*}
  and therefore (\ref{equ:MainErrorLemmaB}) follows. This shows the assertion.
    \end{proof}
    \end{lmm}

Next, we give an upper estimate for the involved  number of degrees of freedom
\begin{equation}\label{equ:MDoFs}
    M(L) := |\mathcal{J}_L^T| \leq \sum_{\Nvec{l}\in\mathcal{I}_L^T} |\mathcal{J}_L^{\Nvec{l}}|
\end{equation}
for the approximation by the projection (\ref{equ:MeyerWaveletApproximationPojection}) with the family $\mathcal{J}^T$ defined by (\ref{equ:GeneralSparseGridSeriesSetJlT}). With regard to the cardinality of a set $\mathcal{I}_L^T$, the following
lemma is shown in~\cite{Griebel2000,Griebel2008}.
\begin{lmm}\label{lem:HyperbolicCrossTIndexSetCardinality}
  It holds
  \begin{equation*}
    \sum_{\Nvec{l}\in\mathcal{I}_L^T} 2^{|\Nvec{l}|_1-N} \leq
    \begin{cases}
      \frac{N}{2}\left(1-2^{-\frac{1}{\nfrac{1}{T}-1}}\right)^{-N}2^L = \Oh(2^{L})&\quad \text{for } 0<T<1,\\
      \Oh(2^{L \frac{T-1}{\nfrac{T}{N} -1}}) &\quad \text{for } -\infty<T<0,\\
      \Oh(2^{L N}) &\quad \text{for } T=-\infty.
    \end{cases}
  \end{equation*}
  The case of $T=0$ is covered by the additional estimate
  \begin{equation*}
    \sum_{\Nvec{l}\in\mathcal{I}_L^T} 2^{|\Nvec{l}|_1-N} \leq 2^L \left(\frac{L^{N-1}}{(N-1)!}+\Oh(L^{N-2})\right) = \Oh(2^L L^{N-1})
    ,
  \end{equation*}
  for $0 \leq T \leq 1/L$.
\end{lmm}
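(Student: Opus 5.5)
The plan is to reduce the sum to a count of lattice points weighted by $2^{|\Nvec{l}|_1}$ and to treat the ranges of $T$ separately. Substituting $\Nvec{m}:=\Nvec{l}-\Nvec{1}\in\mathbb{N}_0^N$ gives $|\Nvec{l}|_1-N=|\Nvec{m}|_1$ and $|\Nvec{l}|_\infty=|\Nvec{m}|_\infty+1$. The defining condition of $\mathcal{I}_L^T$ then becomes
\begin{equation*}
|\Nvec{m}|_1-T|\Nvec{m}|_\infty\leq (L-1)(1-T),
\end{equation*}
so we must bound $\sum 2^{|\Nvec{m}|_1}$ over this set.

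For $T=-\infty$ the set is the cube $|\Nvec{m}|_\infty\leq L-1$, and the sum factorizes into $\bigl(\sum_{m=0}^{L-1}2^m\bigr)^N\leq 2^{LN}$.

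For $T=0$ (and $0\le T\le 1/L$) the condition is essentially $|\Nvec{m}|_1\le L-1$ up to an $O(1)$ shift. I would group the indices by $j:=|\Nvec{m}|_1$. There are $\binom{j+N-1}{N-1}$ vectors with $|\Nvec{m}|_1=j$, so the sum is $\sum_{j\le L}2^j\binom{j+N-1}{N-1}$. The last term dominates, since the geometric factor makes the remaining terms sum to a constant multiple of it. Expanding $\binom{L+N-1}{N-1}=L^{N-1}/(N-1)!+O(L^{N-2})$ gives the claimed bound $2^L\bigl(L^{N-1}/(N-1)!+O(L^{N-2})\bigr)$. The small positive $T\le 1/L$ only enlarges the admissible $|\Nvec{m}|_1$ by at most $T|\Nvec{m}|_\infty\le 1$, which is absorbed into the lower-order terms.

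For $0<T<1$ I would order the coordinates so that $m_1=|\Nvec{m}|_\infty$ is the maximal one and write $s:=\sum_{p\ge2}m_p$. The condition then reads $(1-T)m_1+s\le (L-1)(1-T)$. For fixed $m_2,\dots,m_N$, the sum over $m_1$ is geometric and is bounded by a constant times $2^{(L-1)-s/(1-T)}$. This leaves
\begin{equation*}
2^{L}\Bigl(\sum_{m\ge0}2^{-m\left(\frac{1}{1-T}-1\right)}\Bigr)^{N-1}=2^L\bigl(1-2^{-\frac{T}{1-T}}\bigr)^{-(N-1)}.
\end{equation*}
Since $\frac{T}{1-T}=\frac{1}{1/T-1}$, this is exactly the exponent appearing in the lemma. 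Choosing which coordinate is the maximal one costs a factor of at most $N$. Together with the geometric constant in $m_1$ and the normalization factor $2^{-1}$, this produces $\frac N2(1-2^{-1/(1/T-1)})^{-N}2^L$; one extra factor of the bracket absorbs the sum over $m_1$.

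For $-\infty<T<0$ the key step is to locate the maximizer of $|\Nvec{m}|_1$ on the constraint set. This maximum is attained at the diagonal $\Nvec{m}=(a,\dots,a)$, where $(N-T)a=(L-1)(1-T)$. Hence $|\Nvec{m}|_1\le NL\frac{1-T}{N-T}=L\frac{T-1}{T/N-1}$, up to an additive constant. The number of lattice points in the set is polynomial in $L$, so the crude estimate of counting points times the maximum term only gives the rate up to a polynomial factor. To remove that factor, I would again use the geometric decay of $2^{|\Nvec{m}|_1}$ away from the extremal face: in every direction moving inward from the diagonal face, $|\Nvec{m}|_1$ decreases linearly, so the sum is dominated by a neighborhood of the maximizer and remains $O(2^{LN(1-T)/(N-T)})$ with an $N$-dependent constant.

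I expect this last case to be the main obstacle, because the constraint polytope has the diagonal as a vertex-like maximizer rather than a flat face. One has to check that the sum of $2^{|\Nvec{m}|_1}$ near that maximizer does not pick up logarithmic factors. My first attempt would be a slicing argument: fix $|\Nvec{m}|_\infty=k$, bound the inner sum by $\binom{\cdot}{\cdot}2^{c_k}$, and show that $c_k$ is a strictly concave piecewise-linear function of $k$ with its peak at $k=a$, so that the sum over $k$ is geometric on both sides of the peak. If that bookkeeping becomes too heavy, I would fall back on the results of \cite{Griebel2000,Griebel2008}, as the lemma statement itself does.
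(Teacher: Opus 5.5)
The paper gives no proof of this lemma. It cites \cite{Griebel2000,Griebel2008}, and the $T=0$ sum reappears, only up to constants, in Lemma~\ref{lem:GeneralSparseGridSeriesCardinalityILTD}. So I judge your proposal on its own merits. The shift $\vec m=\vec l-\vec 1$ and the case $T=-\infty$ are fine. The steps below fail as written.

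\textbf{Case $0<T<1$.} You sum first over the maximal coordinate $m_1$ and then freely over the others. This correctly gives
\[
\sum_{\vec l\in\mathcal{I}_L^T}2^{|\vec l|_1-N}\le N\bigl(1-2^{-T/(1-T)}\bigr)^{-(N-1)}\,2^L .
\]
The final ``absorption'' step, however, is invalid. The geometric sum over $m_1$ costs a factor $2$. Replacing that $2$ by $(1-2^{-T/(1-T)})^{-1}$ requires the bracket to be at least $2$, i.e.\ $T\le\frac12$.

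For $T>\frac12$ no argument can reach the displayed constant, because the inequality itself is false there. Take $N=2$, $T=\frac{9}{10}$, $L=5$: the set $\mathcal{I}_5^{T}$ consists of the nine points $(k,1),(1,k)$ with $1\le k\le 5$. The sum is $61$, whereas $\frac N2(1-2^{-9})^{-2}2^5\approx 32.1$. Similarly, for $N=1$ the sum is $2^L-1$ for every $T$, while the stated bound tends to $2^{L-1}$ as $T\to 1$.

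Your bound is the correct one; it gives $64.1$ in the example above. It still yields $\Oh(2^L)$, which is all that Lemma~\ref{lem:T0Cost} uses. State the result in that form rather than claiming the explicit constant.

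\textbf{Case $0\le T\le 1/L$.} Here the leading coefficient $1/(N-1)!$ is sharp, since $\sum_{j=0}^{L-1}2^j\binom{j+N-1}{N-1}\sim 2^LL^{N-1}/(N-1)!$. Off-by-one shifts are therefore not lower order. Summing up to $j=L$, or admitting the extra layer $|\vec m|_1=L$ that you allow for $T>0$, adds $2^L\binom{L+N-1}{N-1}$, which doubles the leading term.

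In fact the set shrinks rather than grows. For $T\ge 0$, the bound $|\vec m|_\infty\le|\vec m|_1$ gives $(1-T)|\vec m|_1\le |\vec m|_1-T|\vec m|_\infty\le (L-1)(1-T)$. Hence $\mathcal{I}_L^T\subset\mathcal{I}_L^0=\{|\vec m|_1\le L-1\}$. Monotonicity of the binomial then gives $\sum_{j=0}^{L-1}2^j\binom{j+N-1}{N-1}\le (2^L-1)\binom{L+N-2}{N-1}=2^L\bigl(\frac{L^{N-1}}{(N-1)!}+\Oh(L^{N-2})\bigr)$. This is also simpler than the Leibniz-rule computation in the appendix.

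\textbf{Case $-\infty<T<0$.} This part is only a plan. Bounding each slice $|\vec m|_\infty=k$ by (number of its lattice points)$\,\times 2^{c_k}$ would reintroduce a factor of order $L^{N-1}$ near the peak. What works is counting by deficits from the maximal coordinate, the mirror image of your $0<T<1$ computation.

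Set $m_1=|\vec m|_\infty$ and $e_p:=m_1-m_p\ge 0$ for $p\ge 2$. The constraint becomes $(N-T)m_1-|\vec e|_1\le (L-1)(1-T)$, and the weight is $2^{Nm_1-|\vec e|_1}$. The geometric sum over $m_1$ is at most $\frac{2^N}{2^N-1}\,2^{(L-1)\frac{N(1-T)}{N-T}}\,2^{\frac{T}{N-T}|\vec e|_1}$. Since $T<0$, the free sum over $\vec e\in\mathbb{N}_0^{N-1}$ equals $(1-2^{T/(N-T)})^{-(N-1)}$. Including the factor $N$ for the choice of the maximal coordinate, this gives $\Oh\bigl(2^{L\frac{T-1}{T/N-1}}\bigr)$ with no $L$-dependent prefactor.
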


For each $\vec{l} \in \mathcal{I}_L^T$, the sets in the defining union of $\mathcal{J}_L^{\vec{l}}$ are nested componentwise, so that
\begin{equation}\label{equ:GeneralSparseGridSeriesSetJlInc}
\bigcup_{\vec{\myiota}:\,|\vec{\myiota}|_\infty\leq R(L)} \prod_{p=1}^N Z_{\myiota_p 2^{l_p+1}}
= \prod_{p=1}^N Z_{R(L) 2^{l_p+1}}.
\end{equation}
Therefore, according to equations (\ref{equ:MDoFs}) and (\ref{equ:GeneralSparseGridSeriesSetJl}), an upper limit for the number of involved degrees of freedom is given by
  \begin{equation}\label{equ:GeneralTCost}
   M(L) \leq \sum_{\vec{l} \in \mathcal{I}_L^T}\left|\mathcal{J}_L^{\vec{l}}\right|
     = \sum_{\vec{l} \in \mathcal{I}_L^T}\prod_{p=1}^N\!\bigl(2R(L)\,2^{l_p+1}+1\bigr)^D
  \lesssim R(L)^{ND}\sum_{\vec{l} \in \mathcal{I}_L^T} 2^{D|\vec{l}|_1}.
\end{equation}
With this bound we can derive the following lemma.
\begin{lmm}\label{lem:T0Cost}
Under the assumptions of Lemma \ref{lem:GeneralSparseGridFourierSeries}  the number of degrees of freedom $M(L)$ of (\ref{equ:MDoFs}) is limited for $T=0$ from above by
\begin{equation}\label{equ:T0Cost}
  M(L) \lesssim 2^{DL}L^{N(D+1)-1}
    ,
\end{equation}
while  it holds
\begin{equation}\label{equ:TLCost}
  M(L) \lesssim 2^{DL}L^{ND}
\end{equation}
for $0<T<1$ and
\begin{equation}\label{equ:TInfCost}
M(L) \lesssim 2^{DLN}
\end{equation}
for $T=-\infty$.
\begin{proof}
Using (\ref{equ:GeneralTCost}) together with the $D$-dimensional
extension of Lemma~\ref{lem:HyperbolicCrossTIndexSetCardinality}
(Lemma~\ref{lem:GeneralSparseGridSeriesCardinalityILTD}), we obtain
for $T=0$ the bound
    \begin{equation}
  M(L)  \lesssim   R(L)^{ND}2^N\sum_{\vec{l} \in \mathcal{I}_L^T} 2^{D|\vec{l}|_1-N}
  \lesssim   2^{DL}L^{N-1} \left(L\frac{t+r}{\hat{s}} \right)^{DN}
  \lesssim  2^{DL}L^{N-1+DN},
\end{equation}
which shows (\ref{equ:T0Cost}).
For $0<T<1$ we have
\begin{equation*}
  M(L) \lesssim 2^{DL}\left(L\frac{t+r}{\hat{s}} \right)^{DN}
  \lesssim 2^{DL}L^{ND}
    ,
\end{equation*}
which proves (\ref{equ:TLCost}). Finally,  (\ref{equ:TInfCost}) follows from
Lemma~\ref{lem:HyperbolicCrossTIndexSetCardinality} for $T=-\infty$.
\end{proof}
\end{lmm}
Note here that for the bounds on the degrees of freedom, the case $T=0$ yields
$M(L)\lesssim 2^{DL}L^{N(D+1)-1}$, whereas we have
$M(L)\lesssim 2^{DL}L^{ND}$ for $0<T<1$.
Therefore, choosing $T>0$ improves the complexity for fixed $N$,
while the principal algebraic exponent in $M$ remains unchanged.

Now we cast the estimates on the degrees of freedom and the associated error into a form which measures the error with respect to the involved degrees of freedom and reach the following lemma in two special cases.
\begin{lmm}\label{lem:EpsComplexity}
  Under the assumptions of Lemma \ref{lem:GeneralSparseGridFourierSeries} for
   $t-t^\prime>0$, $0\leq\frac{r^\prime-r}{t-t^\prime}$ and $M:=| V_{\mathcal{J}_L^T}|$
    it holds for the $T=0$ the estimate
  \begin{equation*}
    \inf_{\myg \in V_{\mathcal{J}_L^0}} \|f-\myg\|_{\mathcal{H}^{t^{\prime},r^{\prime}}_{\text{\emph{mix}}}} \leq \|f-f_{\mathcal{J}_{L}^{0}}\|_{\mathcal{H}^{t^{\prime},r^{\prime}}_{\text{\emph{mix}}}} \lesssim
\left(\frac{M}{\log(M)^{N(D+1)-1}}\right)^{-\frac{t+r-(t^\prime+r^\prime)}{D}}
\| f \|_{\mathcal{H}^{t,r;\hat{s}}_{\text{\emph{mix-exp}}}}
  \end{equation*}
  and for $T=-\infty$ the estimate
  \begin{equation*}
    \inf_{\myg \in V_{\mathcal{J}_L^{-\infty}}} \|f-\myg\|_{\mathcal{H}^{t^{\prime},r^{\prime}}_{\text{\emph{mix}}}}
    \leq
    \|f-f_{\mathcal{J}_{L}^{-\infty}}\|_{\mathcal{H}^{t^{\prime},r^{\prime}}_{\text{\emph{mix}}}}
    \lesssim
    M^{-\frac{t+r-(t^\prime+r^\prime)}{DN}}
    \| f \|_{\mathcal{H}^{t,r;\hat{s}}_{\text{\emph{mix-exp}}}}
    .
  \end{equation*}
    \begin{proof}
    For the case $T=0$, since $0=T\leq\frac{r^\prime-r}{t-t^\prime}$,
    Lemma~\ref{lem:T0Cost} yields
    \begin{equation*}
      M \lesssim 2^{DL}L^{N(D+1)-1}
        .
    \end{equation*}
    Under the present assumptions, (\ref{equ:MainErrorLemmaB}) gives
    \begin{align*}
     \inf_{\myg \in V_{\mathcal{J}_L^0}} \|f-\myg\|_{\mathcal{H}^{t^{\prime},r^{\prime}}_{\text{\emph{mix}}}} &\leq
    \|f-f_{\mathcal{J}_{L}^{0}}\|_{\mathcal{H}^{t^{\prime},r^{\prime}}_{\text{\emph{mix}}}} \lesssim
2^{-L(t+r-(t^\prime+r^\prime))}  \| f \|_{\mathcal{H}^{t,r;\hat{s}}_{\text{\emph{mix-exp}}}}
      = \left(2^{DL}\right)^{-\frac{t+r-(t^\prime+r^\prime)}{D}}
      \| f \|_{\mathcal{H}^{t,r;\hat{s}}_{\text{\emph{mix-exp}}}} \\
      &\lesssim \left(\frac{M}{L^{N(D+1)-1}}\right)^{-\frac{t+r-(t^\prime+r^\prime)}{D}}
      \| f \|_{\mathcal{H}^{t,r;\hat{s}}_{\text{\emph{mix-exp}}}} \\
      &\lesssim \left(\frac{M}{\log(M)^{N(D+1)-1}}\right)^{-\frac{t+r-(t^\prime+r^\prime)}{D}}
      \| f \|_{\mathcal{H}^{t,r;\hat{s}}_{\text{\emph{mix-exp}}}},
    \end{align*}
    where in the last step we used $L\lesssim\log(M)$ for $L\ge1$
    (up to constants), since $M=|V_{\mathcal{J}_L^0}|$ grows at least
    exponentially in $L$ by the construction of $\mathcal{J}_L^0$.

    For the case $T=-\infty$, the bound on $M \lesssim 2^{DLN}$ from Lemma~\ref{lem:T0Cost} together with (\ref{equ:MainErrorLemmaB}) yields
    \begin{align*}
      \inf_{\myg \in V_{\mathcal{J}_L^{-\infty}}} \|f-\myg\|_{\mathcal{H}^{t^{\prime},r^{\prime}}_{\text{\emph{mix}}}} &\leq
      \|f-f_{\mathcal{J}_{L}^{-\infty}}\|_{\mathcal{H}^{t^{\prime},r^{\prime}}_{\text{\emph{mix}}}} \lesssim
      2^{-L(t+r-(t^\prime+r^\prime))}  \| f \|_{\mathcal{H}^{t,r;\hat{s}}_{\text{\emph{mix-exp}}}} \\
      &= \left(2^{DLN}\right)^{-\frac{t+r-(t^\prime+r^\prime)}{DN}}
      \| f \|_{\mathcal{H}^{t,r;\hat{s}}_{\text{\emph{mix-exp}}}} \\
      &\lesssim
      M^{-\frac{t+r-(t^\prime+r^\prime)}{DN}}
      \| f \|_{\mathcal{H}^{t,r;\hat{s}}_{\text{\emph{mix-exp}}}},
    \end{align*}
    which proves the claim.

  \end{proof}
\end{lmm}

Let us briefly discuss these estimates with respect to dimensional effects.
In Lemma \ref{lem:EpsComplexity} for $T=0$, the main algebraic exponent in $M$ is
$\frac{t+r-(t^\prime+r^\prime)}{D}$ and is independent of $N$.
Hence there is no curse of dimensionality in the main algebraic $M$-rate itself.
However, $N$ enters in the logarithmic term
$\log(M)^{N(D+1)-1}$ and through hidden constants, so increasing $N$ can still
substantially worsen the pre-asymptotic complexity.

If one keeps the parameter dependence explicit,
the factor $\left(\frac{t+r}{\hat{s}}\right)^{N(D+1)}$ from index truncation
appears multiplicatively in the bounds on the degrees of freedom and hence in the prefactor of the final error.
Thus, this factor does not change the asymptotic algebraic $M$-exponent,
but it contributes to the $N$-dependent constants and is surely relevant in practice.

\section{Approximation of Electronic Wavefunctions}
In this section we apply the results of the previous section to the approximation of the eigenvalues of the discrete spectrum of the electronic Schr\"odinger Hamiltonian. Moreover, we discuss its implication on different approximation limits and on some quantum computing algorithms for the electronic Schr\"odinger equation.

\subsection{General sparse grid approximation with hyperbolic cross index sets}
\label{sec:GSGApproxWaveFunction}
Recall that the error of the eigenvalue approximations with an appropriate generic finite $N$-particle basis set $\mathcal{B}$ scales as the square of the corresponding spatial eigenfunction error in the conventional Sobolev norm in $\mathcal{H}^1((\mathbb{R}^3)^N)$ \cite{yserentant2010regularity}, i.e.
\begin{equation}\label{equ:H1Eigenvalue}
 \left|E-E_{\mathcal{B}}\right| \leq C(N) \|\Psi_{\Nvec{s}} -\Psi_{\mathcal{B}}\|_{\mathcal{H}^1}^2,
\end{equation}
where $C$ is a constant which may depend on $N$ but not on the number of degrees of freedom $|\mathcal{B}|$. To give an upper bound for the right-hand side, let us briefly recall known regularity results on spatial eigenfunctions of the discrete spectrum.
According to the findings in \cite{kreusler2012mixed}, these spatial eigenfunctions $\Psi_{\Nvec{s}}$  are in $\mathcal{H}_{\text{mix}}^{t,1}((\mathbb{R}^3)^N)$ with $t<\tfrac{3}{4}$ and decay exponentially in the $\mathcal{L}^2((\mathbb{R}^3)^N)$-sense, i.e. there is a constant $\gamma >0 $ with
\begin{equation*}
 \int_{(\mathbb{R}^3)^N} \left(\left(w_{\text{exp}}\left(\Ndvec{x}\right)\right)^\gamma
  \Psi_{\Nvec{s}}\left(\Ndvec{x}\right)\right)^2
  \Intd{\Ndvec{x}} < \infty,
\end{equation*}
with $w_{\text{exp}}$ as defined in (\ref{equ:Expw}). This constant $\gamma$ depends on the distance of the eigenvalue under consideration to the bottom of the essential spectrum, more details and references to the literature can be found in~\cite{kreusler2012mixed,yserentant2010regularity}. Therefore, according to the sparse grid approximation results based on the hyperbolic cross index sets $\mathcal{J}_L^0$ from subsection \ref{sec:HyperbolicCrossApproximation}, these so-called bounded states $\Psi_{\Nvec{s}}$ are in $\mathcal{H}^{t,1;\gamma}_{\text{\emph{mix-exp}}}((\mathbb{R}^3)^N)$ with $t<\tfrac{3}{4}$.
It has been shown in \cite{meng2023mixed} that if all electrons have equal spin, these so-called total antisymmetric spatial eigenfunctions $\Psi_{\Nvec{s}}$ are even in $\mathcal{H}_{\text{mix}}^{t,1}$ with $t<\tfrac{5}{4}$ and hence also in $\mathcal{H}^{t,1;\gamma}_{\text{\emph{mix-exp}}}((\mathbb{R}^3)^N)$ with~$t<\tfrac{5}{4}$. Altogether, we may apply Lemma \ref{lem:EpsComplexity} with $D=3$, $t^\prime=0$, $r^\prime=1$, $r=1$, $\hat{s}=\gamma>0$ and $T=0$ for $t>0$, which yields for $L\in \mathbb{N}$ the estimate
\begin{equation}\label{equ:SCIconv}
    \inf_{\tilde{\Psi} \in V_{\mathcal{J}_{L}^{0}}}\|\Psi_{\Nvec{s}}-\tilde{\Psi}\|_{\mathcal{H}^{1}}
  \lesssim \left(\frac{|V_{\mathcal{J}_{L}^{0}}|}{\log(|V_{\mathcal{J}_{L}^{0}}|)^{4N-1}}\right)^{-\frac{t}{3}}
 \| \Psi_{\Nvec{s}} \|_{\mathcal{H}^{t,1;\hat{s}}_{\text{\emph{mix-exp}}}}
    .
  \end{equation}
In particular, the exponent of the main algebraic convergence rate is independent of the number of electrons $N$, whereas logarithmic terms and hidden constants still depend on $N$.
Thus, in the general partial antisymmetric case we obtain with $t=\tfrac{3}{4(1+\epsilon)}$, up to logarithmic factors, a rate of order $\tfrac{1}{4(1+\epsilon)}$ for the error of the spatial eigenfunction measured in the $\mathcal{H}^1$-norm and hence, due to (\ref{equ:H1Eigenvalue}), a rate of order $\tfrac{1}{2(1+\epsilon)}$ for the eigenvalue error.
In the case of total antisymmetric eigenfunctions, we obtain with $t=\tfrac{5}{4(1+\epsilon)}$, up to logarithmic factors, a rate of order $\tfrac{5}{12(1+\epsilon)}$ for the $\mathcal{H}^1((\mathbb{R}^3)^N)$-error and a rate of order $\tfrac{5}{6(1+\epsilon)}$ for the eigenvalue error.

These statements are consistent with the main algebraic-rate  of the earlier  analysis \cite{griebel2010tensor,hamaekers2009tensor}, while the logarithmic and constant-level dimensional dependencies are made now more explicit in one unified framework. In comparison with the compact wavelet-oriented analysis such as \cite{zeiser2012wavelet} and the weighted Besov characterization in \cite{kogure2022wavelet}, the present formulation is tuned directly to the Schr\"odinger eigenfunction regularity with mixed smoothness and exponential decay and is explicitly related to CBS and qubit-scaling implications.

Furthermore, it has been shown in \cite{kreusler2012mixed} that the spatial eigenfunctions can be written with the help of so-called regularizing factors $g_{ij}$ in the form
\begin{equation}\label{equ:ExpRegWave}
    \Psi_{\Nvec{s}}(\Ndvec{x}) = \left(\sum_{i<j}^Ng_{ij}(|\dvec{x}_i-\dvec{x}_j|)\right)\Phi_{\Nvec{s}}(\Ndvec{x}),
\end{equation}
where $\Phi_{\Nvec{s}} \in \mathcal{H}_{mix}^{t,1}$ with $t=1$ independent of whether $\Psi_{\Nvec{s}}$ obeys a total antisymmetric or just a general partial antisymmetric spin distribution.
There is a lot of freedom in the choice of the function  $g_{ij}$, the precise conditions are given in \cite{yserentant2011mixed}.
Moreover, in \cite{yserentant2011mixed} it was shown that the regular part $\Phi_{\Nvec{s}}$ of the spatial eigenfunctions also obeys an exponential decay in the $\mathcal{L}_2$-sense and hence $\Phi_{\Nvec{s}} \in \mathcal{H}^{1,1;\hat{s}}_{\text{\emph{mix-exp}}}((\mathbb{R}^3)^N)$ for an appropriate $\hat{s}>0$.
Accordingly, Lemma \ref{lem:EpsComplexity} for $T=0$ yields, up to logarithmic factors, a main rate of $\orderO(|V_{\mathcal{J}_{L}^{0}}|^{-\tfrac{1}{3}})$ for the best approcimation error in the general sparse grid space $V_{\mathcal{J}_{L}^{0}}$ in the $\mathcal{H}^1((\mathbb{R}^3)^N)$-norm for the regular part $\Phi_{\Nvec{s}}$ in (\ref{equ:ExpRegWave}). This is a better rate than the one for the full spatial eigenfunction $\Psi_{\Nvec{s}}$ due to the regularizing factors $g_{ij}$, which are chosen according to the singularities of $\Psi_{\Nvec{s}}$ at the coalescence points of the electrons.

\subsection{Approximation limits}
\label{sec:limits}

In this section we discuss approximation limits commonly considered for models of the electronic Schr\"odinger equation, namely the CBS convergence for fixed systems and the scaling behavior for increasing system size.
Here, we will exploit the partial-antisymmetry conditions not only for the regularity properties of the spatial eigenfunctions, but now also for the estimates on the number of degrees of freedom, which are crucial for the study of the approximation limits and for the implications on quantum computing algorithms.

To this end, let us first recall that for a spin distribution
$\Nvec{s} = \Nvec{s}^{(i,N)}$ with $N_\spind = i$ spin-down and
$N_\spinu = N-i$ spin-up electrons, the spatial eigenfunction
$\Psi_{\Nvec{s}}$ obeys the partial antisymmetry condition with
respect to $\mathcal{S}_{\Nvec{s}} \cong
\mathcal{S}_{N_\spind}\times\mathcal{S}_{N_\spinu}$.
The associated antisymmetrization operator
\begin{equation*}
  \mathcal{A}_{\Nvec{s}} u(\Ndvec{x})
  := \frac{1}{N_\spind!\,N_\spinu!}
  \sum_{P\in \mathcal{S}_N\,:\,P\Nvec{s}=\Nvec{s}}(-1)^{|P|}u(P\Ndvec{x})
\end{equation*}
is an $\mathcal{H}^1((\mathbb{R}^{3})^N)$-orthogonal projector onto the space of partially
antisymmetric functions; see~\cite{zeiser2012wavelet,griebel2010tensor}.
As a consequence of the orthogonal projection property (cf.\
\cite[eq.\,(24)]{zeiser2012wavelet}), the best approximation of a
partially antisymmetric function $\Psi_{\Nvec{s}}$ from any subspace
$V\subset\mathcal{H}^1((\mathbb{R}^{3})^N)$ satisfies
\footnote{Since $|\Nvec{l}|_1$ and $|\Nvec{l}|_\infty$ are symmetric functions of the components
of $\Nvec{l}$, the index set $\mathcal{I}^T_L$ in (\ref{equ:HyperbolicCrossTIndex}) is invariant under all permutations
$P \in \mathcal{S}_{\vec{s}}$, and hence $\mathcal{A}_{\vec{s}}(V_{\mathcal{J}^T_L}) \subseteq V_{\mathcal{J}^T_L}$, which gives
$\inf_{\tilde{\Psi}_{\vec{s}} \in \mathcal{A}_{\vec{s}}(V)} \|\Psi_{\vec{s}} - \tilde{\Psi}_{\vec{s}}\|_{\mathcal{H}^1}
\leq \inf_{\tilde{\Psi} \in V} \|\Psi_{\vec{s}} - \tilde{\Psi}\|_{\mathcal{H}^1}$.
The reverse inequality follows from the $\mathcal{H}^1$-orthogonal projection property of $\mathcal{A}_{\vec{s}}$
and $\mathcal{A}_{\vec{s}} \Psi_{\vec{s}} = \Psi_{\vec{s}}$, since for any $\tilde{\Psi} \in V$ it holds $\|\Psi_{\vec{s}} - \tilde{\Psi}\|_{\mathcal{H}^1}^2 =
\|\Psi_{\vec{s}} - \mathcal{A}_{\vec{s}}\tilde{\Psi}\|_{H^1}^2 +
\|(I - \mathcal{A}_{\vec{s}})\tilde{\Psi}\|_{H^1}^2 \geq
\|\Psi_{\vec{s}} - \mathcal{A}_{\vec{s}}\tilde{\Psi}\|_{\mathcal{H}^1}^2$.}
\begin{equation}\label{equ:AntisymBestApprox}
  \inf_{\tilde{\Psi}_{\Nvec{s}}\in\mathcal{A}_{\Nvec{s}}(V)}
  \|\Psi_{\Nvec{s}}-\tilde{\Psi}_{\Nvec{s}}\|_{\mathcal{H}^1}
  = \inf_{\tilde{\Psi}\in V}\|\Psi_{\Nvec{s}}-\tilde{\Psi}\|_{\mathcal{H}^1},
\end{equation}
so the approximation error from the partially antisymmetrized space
$\mathcal{A}_{\Nvec{s}}(V_{\mathcal{J}_L^{0}})$
is bounded by the error estimate of
Lemma~\ref{lem:GeneralSparseGridFourierSeries}.

Now, as common in computational quantum chemistry, we start  from
an appropriate one-particle basis sets, which is used to construct the many-particle basis sets for the CI approximations.
To this end, we consider the underlying one-particle basis sets of the many-particle basis sets constructed in Subsection \ref{sec:HyperbolicCrossApproximation} and in particular used in Lemma \ref{lem:GeneralSparseGridFourierSeries}. For a given $L$ the one-particle basis set $\mathcal{B}^1_L$ is given with the help of (\ref{equ:GeneralSparseGridSeriesSetJlT}) and (\ref{equ:GeneralSparseGridSeriesSetJlInc}) for $N=1$ and $D=3$ by
\begin{equation}\label{equ:OneParticleBasisSet}
    \mathcal{B}^1_L := \bigcup_{l=1}^L \left\{\psi_{l,\dvec{j}} \, : \, (l,\dvec{j}) \in \mathbb{N} \times \mathbb{R}^3,\, l \leq L,\, |\dvec{j}|_\infty \leq R(L) 2^{l+1} \right\},
\end{equation}
where we choose
$R(L)$ according to (\ref{equ:MainErrorLemmaR}), namely
$R(L) := \left\lceil L\frac{t+1}{\gamma}\right\rceil$ with $t>0$ and $\gamma>0$.
Then we define for the index set $\mathcal{I}_L^T$, cf. equation (\ref{equ:HyperbolicCrossTIndex}), with a fixed level $L$ and $T<1$ a general CI basis set
\begin{multline}\label{equ:GeneralCIBasis}
\mathcal{B}^{\mathcal{I}_L^T}_{\mathcal{B}^1_L}(N_\spinu, N_\spind) := \left\{ \bigwedge_{i=1}^{N_\spind} \psi_{l^{\spind}_i,\dvec{j}^{\spind}_i} \otimes \bigwedge_{j=1}^{N_\spinu} \psi_{l^{\spinu}_j,\dvec{j}^{\spinu}_j}
    \, : \, \Nvec{l} \in \mathcal{I}_L^T, \,
    \psi_{l^{\spind}_i,\dvec{j}^{\spind}_i} \in \mathcal{B}^1_L, 1 \leq i \leq N_\spind, \,
    \psi_{l^{\spinu}_j,\dvec{j}^{\spinu}_j} \in \mathcal{B}^1_L, 1 \leq j \leq N_\spinu, \right. \\
    \left. (l^{\spind}_1,\dvec{j}^{\spind}_1) \prec  \cdots \prec (l^{\spind}_{N_\spind},\dvec{j}^{\spind}_{N_\spind})  \, \text{ and }
     (l^{\spinu}_1,\dvec{j}^{\spinu}_1) \prec \cdots \prec (l^{\spinu}_{N_\spinu},\dvec{j}^{\spinu}_{N_\spinu})
    \right\},
\end{multline}
where $\prec$ denotes a conventional strict lexicographic order\footnote{A standard choice is the lexicographic order on $(l,j_1,j_2,j_3)$: for $(l,\dvec{j})$ and $(l',\dvec{j}')$, one sets $(l,\dvec{j})\prec(l',\dvec{j}')$ if and only if $l<l'$, or $l=l'$ and $j_1<j'_1$, or $l=l'$ and $j_1=j'_1$ and $j_2<j'_2$, or $l=l'$, $j_1=j'_1$, $j_2=j'_2$ and $j_3<j'_3$.} for the involved wavelet indices $(l,\dvec{j})\in\mathbb{N}\times\mathbb{Z}^3$. This way, the basis functions in $\mathcal{B}^{\mathcal{I}_L^T}_{\mathcal{B}^1_L}(N_\spinu, N_\spind)$ are Slater determinants built from the one-particle basis functions in $\mathcal{B}^1_L$ with indices in the general index set $\mathcal{I}_L^T$ and span the subspace $\mathcal{A}_{\Nvec{s}}(V_{\mathcal{J}_L^T})$ of partially antisymmetric functions in the general sparse grid space $V_{\mathcal{J}_L^T}$, i.e. $\mathcal{B}^{\mathcal{I}_L^T}_{\mathcal{B}^1_L}(N_\spinu, N_\spind)$ is a basis of $\mathcal{A}_{\Nvec{s}}(V_{\mathcal{J}_L^T})$.

For $T=-\infty$ the index set $\mathcal{I}_L^{-\infty}$ is given by $\{\Nvec{l} \in \mathbb{N}^N : |\Nvec{l}|_\infty \leq L\}$, and hence the basis set $\mathcal{B}^{\mathcal{I}_L^{-\infty}}_{\mathcal{B}^1_L}(N_\spinu, N_\spind)$ is just the full CI basis set $\mathcal{B}^{FCI}_{\mathcal{B}^1_L}(N_\spinu, N_\spind)$ defined in (\ref{equ:FCIbs}), which is the standard choice for the approximation of the eigenvalues of the electronic Schr\"odinger Hamiltonian.
Hence, the corresponding FCI approximation error is bounded from above by the general approximation error of Lemma \ref{lem:GeneralSparseGridFourierSeries} for the full grid case $T=-\infty$.
However, in this case, following Lemma \ref{lem:EpsComplexity}, the bound of Lemma \ref{lem:T0Cost}
leads to an $N$-dependent main algebraic convergence rate in the number of degrees of freedom and hence to a curse of dimensionality in the CBS limit.
Note that for an increasing system size the number of one-particle basis functions $|\mathcal{B}^1_L|$ is usually chosen proportional to the number of involved electrons~$N$. Then, with $|\mathcal{B}_L^1| \propto N$, the FCI approach leads to exponentially increasing costs with the number $N$ of  electrons involved, e.g.\ in case of total antisymmetry (all electrons have the same spin), the number of involved Slater-determinants is given by
\begin{equation*}
  |\mathcal{B}^{FCI}_{\mathcal{B}^1_L}(N_\spinu, N_\spind)| = \binom{|\mathcal{B}^1_L|}{N} \approx \orderO(N^N).
\end{equation*}

Now, to construct an $N$-electron basis set, which avoids an $N$-dependent main algebraic convergence rate in the CBS limit, we can use the general sparse grid approximation results of the previous section. To this end we define
for $T=0$ the sparse CI basis set $\mathcal{B}^{SCI}_{\mathcal{B}^1_L}$ by restricting the underlying one-particle basis functions to those basis functions with indices in the hyperbolic cross index set $\mathcal{I}_L^0$, i.e.
\begin{equation*}
\mathcal{B}^{SCI}_{\mathcal{B}^1_L}(N_\spinu, N_\spind) := \mathcal{B}^{\mathcal{I}_L^0}_{\mathcal{B}^1_L}(N_\spinu, N_\spind).
\end{equation*}
It possesses according to Lemma \ref{lem:T0Cost} and definition (\ref{equ:GeneralCIBasis}) a cardinality of order
\begin{equation}\label{equ:SCIbs}
  |\mathcal{B}^{SCI}_{\mathcal{B}^1_L}(N_\spinu, N_\spind)| = \orderO\left( \frac{2^{3L}L^{4N-1} }{N_\spinu! N_\spind!}\right)
\end{equation}
and leads according to Lemma \ref{lem:GeneralSparseGridFourierSeries}, Lemma \ref{lem:EpsComplexity} and (\ref{equ:SCIconv}) to the convergence estimate
\begin{equation}\label{equ:SCIconvAS}
    \inf_{\tilde{\Psi} \in \mathcal{A}_{\Nvec{s}}\left(V_{\mathcal{J}_{L}^{0}}\right)}\|\Psi_{\Nvec{s}}-\tilde{\Psi}\|_{\mathcal{H}^{1}} = \inf_{\tilde{\Psi} \in V_{\mathcal{J}_{L}^{0}}}\|\Psi_{\Nvec{s}}-\tilde{\Psi}\|_{\mathcal{H}^{1}}
  \leq C(N) \left(\frac{|V_{\mathcal{J}_{L}^{0}}|}{\log(|V_{\mathcal{J}_{L}^{0}}|)^{4N-1}}\right)^{-\frac{t}{3}}
 \| \Psi_{\Nvec{s}} \|_{\mathcal{H}^{t,1;\hat{s}}_{\text{\emph{mix-exp}}}}
  \end{equation}
 for the spatial eigenfunction error in the $\mathcal{H}^1((\mathbb{R}^3)^N)$-norm.
Hence, in case of our proposed SCI approach
we obtain -- in contrast to the FCI method -- an $N$-independent main algebraic convergence rate in the number of degrees of freedom. However,
if one considers an increasing system size, CoD effects may appear in different forms than through an $N$-dependent asymptotic convergence-rate exponent.
In particular, in (\ref{equ:SCIconvAS}) dimensional effects may still enter through logarithmic terms and involved hidden constants, e.g. the involved constant $C(N)$ or the norm $\| \Psi_{\mathcal{B}^{FCI}_{\mathcal{B}^1_{\infty}}} \|_{\mathcal{H}^{t,1;\hat{s}}_{\text{\emph{mix-exp}}}}$ may be exponentially dependent on the number of electrons $N$, which results in exponential costs to achieve a desired accuracy for a sequence of systems with an increasing number of electrons.

\subsection{Sparse encoding for quantum computing}\label{sec:QuantumComputing}
Our sparse grid approach may also enable more efficient state encodings for quantum algorithms for the electronic Schr\"odinger equation, thereby reducing the number of qubits required to represent the wavefunction and thus reduce the cost of the computation. In the following, we first provide a brief overview of relevant quantum algorithms and then discuss how our sparse grid approximation results affect their qubit-scaling behavior.

Quantum algorithms for the electronic Schr\"odinger equation can be grouped into variational methods for near-term Noisy Intermediate-Scale Quantum (NISQ) devices, notably the Variational Quantum Eigensolver (VQE), and phase-estimation or Hamiltonian-simulation based methods for fault-tolerant quantum computing (FTQC), notably Quantum Phase Estimation (QPE). In principle, these approaches can offer a substantial advantage over classical methods because they prepare and process many-body wavefunctions directly in Hilbert space, potentially reducing exponential scaling bottlenecks in high-accuracy simulations. In practice, however, this advantage is constrained by qubit count and quality, making qubit-efficient state encodings, low-depth circuit constructions, and robust error mitigation or error correction essential for relevant calculations \cite{mcardle2020quantum}. Current quantum computing hardware provides on the order of $10^2$ to $10^3$ physical qubits, typically with non-negligible error rates, whereas fault-tolerant quantum chemistry requires logical qubits and substantially larger effective resources.

State functions of many-particle Hamiltonian operators are functions of Hilbert spaces with specific exchange symmetries. In the case of bosons (e.g. photons), the state functions are symmetric under interchange of any two particles, and bosons have integer spin. In the case of fermions (e.g. electrons), state functions are antisymmetric under interchange of any two particles, and fermions have spin $\spinup$ or spin $\spindown$. To perform a quantum calculation, the many-body states are approximated in finite-dimensional Hilbert spaces and encoded using a finite number of qubits. This applies to algorithms for NISQ hardware, such as VQE, and to algorithms for FTQC hardware, such as QPE and Hamiltonian simulation methods \cite{lee2023evaluating}.

The computational resources required for the calculation of many-particle systems are usually measured in terms of the number $N$ of particles in the system and the size
\begin{equation*}
K := K(L) := |\mathcal{B}^1_L|
\end{equation*}
of the one-particle basis set, i.e. the number of spin orbitals to be considered. In the case of fermions and the formulation using the second quantization, Jordan-Wigner (JW) and Bravyi-Kitaev (BK) mappings are typically used \cite{mcardle2020quantum}, where the number of qubits is linearly dependent on the number of one-particle basis functions (or orbitals) considered in the approach. Here, the number of required qubits is $K$, because the fermionic Fock states are mapped directly to a $K$-qubit quantum register. This means that, in the case of $K$ single-particle spin orbitals for the approximate representation of the $N$-electron wavefunction, these states can be mapped to $n_q=O(K)$ qubits \cite{babbush2016exponentially}.
Furthermore, techniques exist to exploit known symmetries of the system to reduce the number of qubits used. Typically, the number of qubits is reduced by one for each symmetry.

These conventional representations also include many-body states with particle numbers different from $N$. If one restricts to states with exactly $N$ particles, so-called efficient or compact encodings can be used. In that case, for example, the spin orbital of each of the $N$ electrons is encoded as a binary string of length $\lceil\log_2(K)\rceil$ and hence
the required number of qubits depends only logarithmically on the number of one-particle basis functions (or orbitals), i.e. $n_q=O(N\log_2(K))$, see e.g.~\cite{babbush2017exponentially,shee2022qubit,yoffe2024qubit}.
Such binary encodings have been successfully used for variational approaches to the electronic Schr\"odinger equation, cf. \cite{shee2022qubit,yoffe2024qubit}. In this setting, the wavefunction is represented in the basis of Slater determinants rather than in the basis of spin-orbital occupations, and the many-particle states are mapped to an $n_q$-qubit register via a binary encoding of the determinants. Hence, the qubit requirement for the representation of the wavefunction is logarithmic in the number of Slater determinants retained, i.e. $n_q=O(\log_2(|\mathcal{B}^{FCI}_{\mathcal{B}_L^1}|))$ in the FCI case.

In the following, we assume that the nested sequence $\{\mathcal{B}^1_{L}\}_{L\in\mathbb{N}}$  of one-particle basis sets is chosen
by (\ref{equ:OneParticleBasisSet}).
Then, according to Lemma \ref{lem:T0Cost} for $N=1$ and $D=3$,
the cardinalities of the one-particle basis sets are given by $2^{3L}L^3$ and the number of qubits required to represent the many-particle states of the FCI approach with $N_\spinu$ spin-up and $N_\spind$ spin-down electrons in a binary encoding is given by $\lceil \log_2(|\mathcal{B}^{FCI}_{\mathcal{B}_L^1}|) \rceil$.
Hence, for fixed numbers of spin-up and spin-down electrons $N_\spinu$ and $N_\spind$, the states can be mapped to
\begin{equation}\label{equ:FCIqubits}
  n_{q}^{FCI}(L,N_\spinu,N_\spind) :=\orderO\left(\log_2\left(\binom{2^{3L}L^3}{N_\spinu}\binom{2^{3L}L^3}{N_\spind} \right) \right)
  \lesssim \orderO\left(3(L+\log_2(L))N-\left(N_\spinu\log_2(N_\spinu)+N_\spind\log_2(N_\spind)\right)
    \right)
\end{equation}
qubits \cite{chamaki2022compacct}, where we used Stirling's formula~\cite{robbins1955remark}.
Note that the above-mentioned conventional quantum computing algorithms (VQE, QPE, and Hamiltonian simulation methods) lead, compared to the CBS limit, in the best case to a FCI solution with an error dependent on the chosen underlying finite set of one-particle basis functions or spin orbitals. From Lemma \ref{lem:EpsComplexity} and the discussion in Subsections \ref{sec:GSGApproxWaveFunction} and \ref{sec:limits}, we know that the choice of a sparse many-particle basis associated with $\mathcal{B}^{SCI}_{\mathcal{B}_L^1}$ leads to essentially the same main asymptotic error behavior with substantially reduced degrees of freedom compared to the choice of the corresponding FCI many-particle basis $\mathcal{B}^{FCI}_{\mathcal{B}_L^1}$.
In particular, according to (\ref{equ:SCIbs}) such an underlying sparse CI basis set $\mathcal{B}^{SCI}_L(N_\spinu,N_\spind)$ can be binary encoded with the scaling
\begin{equation}\label{equ:SCIqubits}
\begin{split}
    n_{q}^{SCI}(L,N_\spinu,N_\spind) &:=\orderO\left( \log_2\left(
  \frac{2^{3L}L^{4N-1} }{N_\spinu! N_\spind!}
    \right) \right)\\
    &\lesssim
  \orderO\left( 3L + (4N-1)\log_2(L) -
    (N_\spinu\log_2(N_\spinu) + N_\spind\log_2(N_\spind) - N)
    \right)
    \end{split}
\end{equation}
qubits \cite{chamaki2022compacct}. Note that, with respect to the discretization parameter $L$, the number of qubits required by the SCI approach scales substantially more favorably than in the FCI approach: in SCI, the leading algebraic term in $L$ is independent of $N$, whereas in FCI it is linear in $N$, i.e. we encounter just the $L$-dependent terms $3L+(4N-1)\log_2(L)$ in SCI and  $3LN+3\log_2(L)N$ in FCI.
In other words, the multiplicative coupling of discretization level and particle number in the FCI scaling (product-type $L\,N$ behavior) is replaced by an additive dependence in SCI (sum-type $L+N$ behavior, up to logarithmic factors).

Consequently, for fixed number of electrons, the replacement of full CI like determinant sets by sparse CI sets can substantially reduce the qubit requirements for a given discretization level while preserving the same main asymptotic algebraic convergence behavior of the CBS limit, up to logarithmic terms.
Accordingly, the presented sparse CI approximation can improve the scaling behavior of the number of qubits required to represent many-particle wavefunctions when the discretization parameter $L$ increases toward the CBS limit. Of course, this does not remove all complexity barriers, but it can enlarge the practically feasible problem range in both near-term and fault-tolerant settings.
\begin{table}[htbp]
\centering
\caption{
Basis-set size, determinant-space dimensions, and
system-register qubit counts for several molecular
systems at increasing discretization levels~$L$, with one-particle
basis size $|\mathcal{B}_L^1|=2^{3L}L^3$.  Four state-space encodings are
compared: the Jordan--Wigner (JW) mapping
($n_q^{\mathrm{JW}}=|\mathcal{B}_L^1|$), the first-quantization binary encoding
($n_q^{\mathrm{1st}}=N\lceil\log_2 |\mathcal{B}_L^1|\rceil$), and the binary
encodings of the full~CI and sparse~CI determinant spaces
($n_q^{\mathrm{FCI}}$ and $n_q^{\mathrm{SCI}}$),
cf.~(\ref{equ:FCIqubits}) and~(\ref{equ:SCIqubits}).}
\label{tab:nonperiodic}
\vspace{1pt}
{\renewcommand{\arraystretch}{1.12}
\begin{tabular}{l c c r r r r r r r}
\hline
System & $(N_\uparrow,N_\downarrow)$ & $L$ & $|\mathcal{B}_L^1|$ &
$\log_{10}\lvert \mathcal{B}^{FCI}_{\mathcal{B}_L^1}\rvert$ &
$\log_{10}\lvert \mathcal{B}^{SCI}_{\mathcal{B}_L^1}\rvert$ &
$n_q^{\text{JW}}$ &
$n_q^{\text{1st}}$ &
$n_q^{\text{FCI}}$ &
$n_q^{\text{SCI}}$ \\[2pt]
\hline
H$_2$
 & $(1,1)$ & 4 & $2.6\times10^{5}$ & 11 &  8 & 262\,144        & 36  & 39  & 26  \\
 &         & 5 & $4.1\times10^{6}$ & 13 &  9 & $4.1\times10^6$ & 44  & 47  & 31  \\
 &         & 6 & $5.7\times10^{7}$ & 16 & 11 & $5.7\times10^7$ & 52  & 54  & 36  \\[3pt]
LiH
 & $(2,2)$ & 4 & $2.6\times10^{5}$ & 21 & 12 & 262\,144        & 72  & 74  & 40  \\
 &         & 5 & $4.1\times10^{6}$ & 26 & 14 & $4.1\times10^6$ & 88  & 90  & 48  \\
 &         & 6 & $5.7\times10^{7}$ & 30 & 17 & $5.7\times10^7$ & 104 & 105 & 55  \\[3pt]
BeH$_2$
 & $(3,3)$ & 4 & $2.6\times10^{5}$ & 31 & 16 & 262\,144        & 108 & 107 & 53  \\
 &         & 5 & $4.1\times10^{6}$ & 38 & 19 & $4.1\times10^6$ & 132 & 131 & 63  \\
 &         & 6 & $5.7\times10^{7}$ & 45 & 22 & $5.7\times10^7$ & 156 & 154 & 72  \\[3pt]
H$_2$O
 & $(5,5)$ & 4 & $2.6\times10^{5}$ & 50 & 23 & 262\,144        & 180 & 171 & 76  \\
 &         & 5 & $4.1\times10^{6}$ & 62 & 28 & $4.1\times10^6$ & 220 & 211 & 92  \\
 &         & 6 & $5.7\times10^{7}$ & 73 & 32 & $5.7\times10^7$ & 260 & 249 & 105 \\[3pt]
N$_2$
 & $(7,7)$ & 4 & $2.6\times10^{5}$ &  68 & 29 & 262\,144        & 252 & 233 &  98 \\
 &         & 5 & $4.1\times10^{6}$ &  85 & 36 & $4.1\times10^6$ & 308 & 289 & 118 \\
 &         & 6 & $5.7\times10^{7}$ & 101 & 41 & $5.7\times10^7$ & 364 & 341 & 136 \\[3pt]
C$_2$H$_4$
 & $(8,8)$ & 4 & $2.6\times10^{5}$ &  77 & 32 & 262\,144        & 288 & 263 & 107 \\
 &         & 5 & $4.1\times10^{6}$ &  97 & 39 & $4.1\times10^6$ & 352 & 327 & 131 \\
 &         & 6 & $5.7\times10^{7}$ & 115 & 45 & $5.7\times10^7$ & 416 & 387 & 150 \\[3pt]
C$_6$H$_6$
 & $(21,21)$ & 5 & $4.1\times10^{6}$ & 238 &  82 & $4.1\times10^6$ &  924 &  801 & 272 \\
 &           & 6 & $5.7\times10^{7}$ & 286 &  96 & $5.7\times10^7$ & 1\,092 &  958 & 319 \\[3pt]
FeMoco
 & $(27,27)$ & 5 & $4.1\times10^{6}$ & 301 &  99 & $4.1\times10^6$ & 1\,188 & 1\,010 & 328 \\
 &           & 6 & $5.7\times10^{7}$ & 363 & 117 & $5.7\times10^7$ & 1\,404 & 1\,212 & 387 \\
\hline
\end{tabular}
}
\end{table}

To illustrate the compression achieved by the sparse CI construction,
Table~\ref{tab:nonperiodic} compares the determinant-space dimensions and
system-register qubit requirements of the full~CI and sparse~CI
approaches for a representative set of molecular systems
ranging from $\mathrm{H}_2$ ($N=2$ electrons) to the FeMoco active
space ($N=54$ electrons) at several discretization levels~$L$.  The
one-particle basis size $K(L)=2^{3L}L^3$ spans a range of
discretization levels comparable in magnitude to the basis sizes
employed in recent first-quantization quantum algorithms for chemical
and solid-state systems~\cite{su2021fault,georges2025quantum,babbush2018low}.
All entries are evaluated from the asymptotic
formulae~(\ref{equ:FCIqubits}) and~(\ref{equ:SCIqubits}) with
leading constants set to one.

For all considered systems, the sparse~CI basis
$|\mathcal{B}^{SCI}_{\mathcal{B}_L^1}|$ is by many orders of magnitude
smaller than the full~CI basis
$|\mathcal{B}^{FCI}_{\mathcal{B}_L^1}|$ while, according to
Lemma~\ref{lem:GeneralSparseGridFourierSeries}, the main algebraic convergence rate
toward the complete basis set limit is preserved. This dramatic reduction
translates directly into a lower qubit count when the determinant
space is binary-encoded for quantum
algorithms~\cite{shee2022qubit,yoffe2024qubit,chamaki2022compacct}: across representative sizes at $L=6$, one obtains for example in practice for LiH a reduction from $10^{30}$ to $10^{17}$ determinants and from $105$ to $55$ binary-encoding qubits, for C$_2$H$_4$ from $10^{115}$ to $10^{45}$ determinants and from $387$ to $150$ qubits, and for FeMoco from $10^{363}$ to $10^{117}$ determinants and from $1\,212$ to $387$ qubits.
These reductions become increasingly
pronounced for larger electron numbers and higher discretization
levels, making the sparse~CI approach particularly attractive for
both near-term variational and fault-tolerant quantum computing
platforms.

We emphasize that the qubit-count reduction discussed above addresses the \emph{state-space encoding} and does not by itself constitute a complete quantum algorithm.  Translating the sparse CI binary encoding into a working quantum simulation requires several additional components, each presenting its own challenges.  First, an efficient \emph{circuit construction for the block encoding} of the Hamiltonian restricted to the sparse CI determinant space must be developed; in the full CI setting such constructions are available via sparse qubitization~\cite{georges2025quantum,su2021fault} or linear-combination-of-unitaries decompositions, but their adaptation to the structured sparsity pattern of the hyperbolic cross selection
remains open.  Second, the \emph{gate complexity and circuit depth} of the resulting quantum walk operator must be assessed: while the reduced state-space dimension lowers the system-register size, additional arithmetic or data-loading circuits may be needed to address the non-trivial index structure of the sparse
basis, potentially offsetting part of the qubit savings through increased Toffoli-gate counts or ancilla requirements.  Third, an appropriate \emph{initial state preparation} within the compressed determinant
space is required; for variational approaches~\cite{yoffe2024qubit,shee2022qubit} this amounts to designing parameterized circuits on the reduced register, whereas for fault-tolerant QPE based schemes~\cite{georges2025quantum} one needs an initial state with sufficient overlap with the ground state
projected onto the sparse CI subspace.  Fourth, the \emph{measurement and post-processing} strategy must account for the fact that expectation values of the Hamiltonian in the binary-encoded determinant basis
involve $O(|\mathcal{B}^{SCI}_{\mathcal{B}_L^1}|^2)$ Pauli strings in the worst case~\cite{yoffe2024qubit}, although this can be mitigated by exploiting the sparsity of the CI matrix arising from the Slater--Condon rules and by
employing Pauli grouping techniques.  Despite these open questions, the exponential compression of the determinant space demonstrated in Table~\ref{tab:nonperiodic} — and the corresponding reduction of the leading
qubit-scaling term from $LN$ to about $L+N$ — constitutes a fundamental structural advantage that persists independently of the specific algorithmic realization.  We therefore regard the sparse CI binary encoding as
a highly promising framework to be further developed in conjunction with block-encoding, state-preparation, and error-correction techniques for quantum simulations of electronic structure.

\section{Conclusions}
\label{sec:conclusions}

In this article we have explored the application of sparse grid based SCI approaches to the electronic Schrödinger equation. The possibility to successfully apply such a sparse grid approach  is based on the known regularity properties of eigenfunctions in the discrete spectrum. In particular, the results of Yserentant show that bounded states belong to anisotropic Sobolev spaces of dominating mixed smoothness with exponential decay. We have demonstrated that SCI approximation methods can achieve convergence rates which are essentially independent of the number of electrons $N$ and thus mitigate the CoD which limits conventional FCI methods. Altogether, the SCI approach achieves the same asymptotic accuracy as FCI in the CBS limit, but with substantially reduced computational cost.

Another important practical advantage of the SCI approach emerges in the context of quantum computing algorithms, where the reduced number of Slater determinants required for a given accuracy translates directly into a reduced state-space encoding size and, in binary determinant encodings, into fewer system-register qubits. As illustrated by the comparisons in Section~\ref{sec:QuantumComputing}, this reduction is particularly significant for larger molecular systems.

While no method can completely eliminate approximation errors in practical electronic structure calculations, the SCI approach provides a mathematically rigorous framework for achieving FCI-level accuracy with substantially reduced computational cost. At the same time, as discussed in Section~\ref{sec:QuantumComputing}, turning these representation-level gains into full end-to-end quantum advantage still requires efficient block encodings, state preparation, and measurement strategies tailored to the sparse CI structure. It therefore appears to be of interest for both classical high-performance computing and emerging quantum computing platforms.

\appendix

\section*{Acknowledgments}
M.G. gratefully acknowledges support from the {\em Hausdorff Center for Mathematics}, funded by the Deutsche Forschungsgemeinschaft (DFG, German Research Foundation) under Germany's Excellence Strategy - EXC-2047/2 - 390685813,
and the {\em Collaborative Research Centre 1720}, funded by the Deutsche Forschungsgemeinschaft (DFG, German Research Foundation) - CRC 1720 - 539309657.

\bibliographystyle{siam}
\bibliography{references}

\section{Antisymmetric Sparse Grid Approximation}
\label{app:AntisymmetricSparseGrid}

In this appendix we complement the approximation results of
Subsection~\ref{sec:HyperbolicCrossApproximation} and their application
to eigenfunctions of the electronic Schr\"odinger equation in
Subsection~\ref{sec:GSGApproxWaveFunction} by exploiting partial
antisymmetry of the spatial wavefunction. Following~\cite[Lemma~15, Theorem~16]{zeiser2012wavelet}, the monotone
level restriction within each spin group replaces the factor
$L^{N-1}/(N_\spinu!\,N_\spind!)$ by the exponential term
$e^{4\sqrt{2L}}$, with exponent independent of~$N$.

To this end, we introduce the following antisymmetric level restriction and the corresponding approximation space.
\begin{dfntn}
\label{def:AntisymLevelSet}
  For a spin distribution $\Nvec{s}$ with $N_\spind$ spin-down and
  $N_\spinu$ spin-up electrons, define the
  \emph{antisymmetric restriction} of $\mathcal{I}_L^0$ by
  \begin{equation}\label{equ:AntisymLevelSet}
    \mathcal{I}_L^{0,\Nvec{s}}
    := \bigl\{
      \Nvec{l}\in\mathcal{I}_L^0
      \,:\,
      l_1\geq l_2\geq\cdots\geq l_{N_\spind}
      \;\text{and}\;
      l_{N_\spind+1}\geq\cdots\geq l_N
    \bigr\},
  \end{equation}
  the corresponding antisymmetric sparse grid index set by
  \begin{equation*}
    \mathcal{J}_L^{0,\Nvec{s}}
    := \bigl\{(\Nvec{l},\Ndvec{j})\in\mathcal{J}_L^0
       \,:\,\Nvec{l}\in\mathcal{I}_L^{0,\Nvec{s}}\bigr\},
  \end{equation*}
  and the antisymmetric approximation space by
  \begin{equation*}
    V_{\mathcal{J}_L^{0,\Nvec{s}}}
    := \mathrm{span}\bigl\{\mathcal{A}_{\Nvec{s}}(\psi_{\Nvec{l},\Ndvec{j}})
       \,:\,(\Nvec{l},\Ndvec{j})\in\mathcal{J}_L^{0,\Nvec{s}}\bigr\}.
  \end{equation*}
\end{dfntn}
Note here that $V_{\mathcal{J}_L^{0,\Nvec{s}}}$ is the space of partially antisymmetric functions spanned by the antisymmetric CI basis $\mathcal{B}^{SCI}_{\mathcal{B}^1_L}(N_\spinu, N_\spind)$ as defined in (\ref{equ:GeneralCIBasis}).

Next, we give an analogue of Lemma~\ref{lem:GeneralSparseGridFourierSeries} for the partially antisymmetric setting. The proof follows the level-counting
argument of~\cite[Lemma~15]{zeiser2012wavelet}, adapted to the framework used here.
\begin{lmm}
\label{lem:AntisymT0Cost}
  Under the assumptions of Lemma~\ref{lem:T0Cost}, the number of
  degrees of freedom $M^{as}(L):=|\mathcal{J}_L^{0,\Nvec{s}}|$
  satisfies
  \begin{equation}\label{equ:AntisymT0Cost}
    M^{as}(L)
    \;\lesssim\;
    2^{DL}\,L^{ND}\,e^{4\sqrt{2L}}.
  \end{equation}
  The exponential factor $e^{4\sqrt{2L}}$ is independent of $N$.
  In particular, comparing with~(\ref{equ:T0Cost}), the polynomial
  power of $L$ is reduced from $N(D+1)-1$ to $ND$.
  \begin{proof}
    By the argument used for~(\ref{equ:GeneralTCost}) in the proof of
    Lemma~\ref{lem:T0Cost} we get
    \begin{equation*}
      M^{as}(L)
      \;\lesssim\;
      R(L)^{ND}\sum_{\Nvec{l}\in\mathcal{I}_L^{0,\Nvec{s}}}2^{D|\Nvec{l}|_1}.
    \end{equation*}
    With~(\ref{equ:MainErrorLemmaR}), there holds $R(L)^{ND}\lesssim L^{ND}$.
    It thus remains to estimate
    $\sum_{\Nvec{l}\in\mathcal{I}_L^{0,\Nvec{s}}}2^{D|\Nvec{l}|_1}$.
For a level vector $\Nvec{l}\in\mathcal{I}_L^{0,\Nvec{s}}$ with
    $|\Nvec{l}|_1=j$, the spin-down levels $(l_1,\ldots,l_{N_\spind})$
    form a non-increasing sequence of non-negative integers that sum to
    some $j_\spind\leq j$, and the spin-up levels sum to
    $j-j_\spind$.
    The number of monotonically non-increasing sequences of
    non-negative integers with sum $\ell$ equals the integer partition
    number $p(\ell)$, which satisfies (see
    e.g.~\cite[p.\,315]{zeiser2012wavelet}
    and~\cite{yserentant2010regularity})
    \begin{equation*}
      p(\ell)\lesssim\frac{e^{2\sqrt{2\ell}}}{\ell},
      \qquad\ell\geq 1.
    \end{equation*}
    Since both spin groups sum to at most $j$, the number of admissible
    level vectors with $|\Nvec{l}|_1=j$ is bounded by
    \begin{equation*}
      \sum_{j_\spind=0}^{j}p(j_\spind)\,p(j-j_\spind)
      \;\leq\;
      (j+1)\,p(j)^2
      \;\lesssim\;
      (j+1)\,e^{4\sqrt{2j}}
      \;\lesssim\;
      e^{4\sqrt{2j}+\orderO(\log j)},
    \end{equation*}
    where we used $p(j_\spind)\leq p(j)$ for $j_\spind\leq j$.
    Therefore,
    \begin{equation*}
      \sum_{\Nvec{l}\in\mathcal{I}_L^{0,\Nvec{s}}}2^{D|\Nvec{l}|_1}
      \;\lesssim\;
      \sum_{j=N}^{L+N-1}e^{4\sqrt{2j}}\,2^{Dj}
      \;\lesssim\;
      e^{4\sqrt{2L}}\,2^{DL},
    \end{equation*}
    where the last step follows from Lemma~12 in~\cite{zeiser2012wavelet}
    (applied with $\alpha=D\log 2$ and $n=0$, $m=0$).
    Combining the estimates yields~(\ref{equ:AntisymT0Cost}).
  \end{proof}
\end{lmm}

Now we give an analogue of Lemma~\ref{lem:GeneralSparseGridFourierSeries} for the partially antisymmetric setting, which yields the $\varepsilon$-complexity of the antisymmetric sparse grid approximation for eigenfunctions of the electronic Schr\"odinger equation.
\begin{lmm}
\label{lem:AntisymSCIconv}
  Let $\Psi_{\Nvec{s}}\in\mathcal{H}^{t,1;\hat{s}}_{\mathrm{mix\text{-}exp}}
  ((\mathbb{R}^3)^N)$ with $t>0$, $\hat{s}=\gamma>0$, be partially
  antisymmetric for spin distribution~$\Nvec{s}$.
  Set $D=3$, $t^\prime=0$, $r^\prime=1$, $r=1$,
  and let $M^{as}:=|V_{\mathcal{J}_L^{0,\Nvec{s}}}|$.
  Then, by~(\ref{equ:AntisymBestApprox}) and
  Lemma~\ref{lem:GeneralSparseGridFourierSeries},
  \begin{equation}\label{equ:AntisymSCIconv}
    \inf_{\Psi^{as}\in V_{\mathcal{J}_L^{0,\Nvec{s}}}}
    \|\Psi_{\Nvec{s}}-\Psi^{as}\|_{\mathcal{H}^1}
    \;\lesssim\;
    \bigl(M^{as}\bigr)^{-\frac{t}{3}}
    \cdot
    \bigl(\log M^{as}\bigr)^{Nt}
    \cdot
    e^{c_0\sqrt{\log M^{as}}}
    \,\|\Psi_{\Nvec{s}}\|_{\mathcal{H}^{t,1;\hat{s}}_{\mathrm{mix\text{-}exp}}},
  \end{equation}
  where $c_0 = 4t\sqrt{2}/3$ is independent of $N$.
  \begin{proof}
    By~(\ref{equ:AntisymBestApprox}), the error is bounded by
    $\inf_{g\in V_{\mathcal{J}_L^{0,\Nvec{s}}}}
    \|\Psi_{\Nvec{s}}-g\|_{\mathcal{H}^1}$, and
    Lemma~\ref{lem:GeneralSparseGridFourierSeries} with $D=3$,
    $t^\prime=0$, $r^\prime=r=1$, $T=0$ gives
    \begin{equation*}
      \|\Psi_{\Nvec{s}}-f_{\mathcal{J}_L^{0,\Nvec{s}}}\|_{\mathcal{H}^1}
      \lesssim 2^{-Lt}\|\Psi_{\Nvec{s}}\|_{\mathcal{H}^{t,1;\hat{s}}_{\mathrm{mix\text{-}exp}}}
      = \bigl(2^{3L}\bigr)^{-t/3}
        \|\Psi_{\Nvec{s}}\|_{\mathcal{H}^{t,1;\hat{s}}_{\mathrm{mix\text{-}exp}}}.
    \end{equation*}
    From Lemma~\ref{lem:AntisymT0Cost} with $D=3$ we have
    $M^{as}\lesssim 2^{3L}L^{3N}e^{4\sqrt{2L}}$, and thus it follows that
    $2^{3L}\lesssim M^{as}/(L^{3N}e^{4\sqrt{2L}})$. Hence we obtain
    \begin{equation*}
      \bigl(2^{3L}\bigr)^{-t/3}
      \;\lesssim\;
      \bigl(M^{as}\bigr)^{-t/3}
      \cdot L^{Nt}
      \cdot e^{4t\sqrt{2L}/3}.
    \end{equation*}
    Since $M^{as}$ grows at least exponentially in $L$, we have
    $L\lesssim\log_2 M^{as}$, and therefore
    $L^{Nt}\lesssim(\log M^{as})^{Nt}$ and
    $e^{4t\sqrt{2L}/3}\lesssim e^{c_0\sqrt{\log M^{as}}}$
    with $c_0=4t\sqrt{2}/3$.
    This proves~(\ref{equ:AntisymSCIconv}).
  \end{proof}
\end{lmm}

Finally, we give some remarks on the implications of the above results.
\begin{rmrk}[Relation to \cite{griebel2010tensor}]
In~\cite[p.\,532]{griebel2010tensor} it is shown that
\begin{equation*}
  |V_{\mathcal{J}_L^{0,\Nvec{s}}}| \leq
  \frac{1}{N_\spinu!\,N_\spind!}|V_{\mathcal{J}_L^0}|,
\end{equation*}
which reduces the constant in the degree of freedom estimate by
$1/(N_\spinu!\,N_\spind!)$ without changing the order in $L$.
In contrast, the estimate (\ref{equ:AntisymT0Cost}) used in Lemma~\ref{lem:AntisymT0Cost} replaces
$L^{N-1}/(N_\spinu!\,N_\spind!)$ by $e^{4\sqrt{2L}}$. Hence both
improvements stem from the same antisymmetrization mechanism and can thus not be
combined.
\end{rmrk}

\begin{rmrk}
\label{rem:AntisymComparison}
  Comparing Lemma~\ref{lem:AntisymSCIconv} with
  (\ref{equ:SCIconv}), the logarithmic factors are
  \begin{equation*}
    (\log M)^{4N-1}/(N_\spinu!\,N_\spind!)
    \quad\text{and}\quad
    (\log M^{as})^{Nt}\,e^{c_0\sqrt{\log M^{as}}},
  \end{equation*}
  respectively. Hence the two estimates are of different type:
  The first one is a purely polylogarithmic term, whereas the second
  one contains a polylogarithmic factor together with an exponential
  term in $\sqrt{\log M^{as}}$.
Moreover, in accordance with Lemma~15 in
  \cite{zeiser2012wavelet}, the replacement of
  $L^{N-1}/(N_\spinu!\,N_\spind!)$ by $e^{4\sqrt{2L}}$ already accounts
  for antisymmetrization at the level of
  (\ref{equ:AntisymLevelSet}). Therefore no additional factor
  $1/(N_\spinu!\,N_\spind!)$ is necessary. In both estimates, the
  main algebraic rate remains $(M)^{-t/3}$ (or $(M^{as})^{-t/3}$),
 and is independent of $N$.
\end{rmrk}

\begin{rmrk}
\label{rem:AntisymEncodings}
According to Lemma~\ref{lem:AntisymT0Cost}
the antisymmetric sparse CI basis satisfies besides the degree of freedom estimate
(\ref{equ:SCIbs}) also
\begin{equation*}
  |\mathcal{B}^{SCI}_{\mathcal{B}_L^1}(N_\spinu,N_\spind)|
  = \orderO\!\left(2^{3L}\,L^{3N}\,e^{4\sqrt{2L}}\right),
\end{equation*}
cf.~(\ref{equ:AntisymT0Cost}) with $D=3$. Hence, an additional corresponding upper bound for the qubit
count for binary encoding is given by
\begin{equation*}
  n_q^{SCI}(L,N_\spinu,N_\spind)
  := \orderO\!\left(
       \log_2\!\bigl(2^{3L}L^{3N}e^{4\sqrt{2L}}\bigr)
     \right)
  \lesssim \orderO\!\left(
     3L + 3N\log_2 L + c_1\sqrt{L}
  \right)
\end{equation*}
with $c_1=4\sqrt{2}/\ln 2$, cf.\ \cite{chamaki2022compacct}.
Compared with the bound (\ref{equ:SCIqubits}), the logarithmic coefficient of $L$ is reduced
from $4N-1$ to $3N$, at the price of the additional sub-linear term
$c_1\sqrt{L}$ with $N$-independent coefficient $c_1$.
\end{rmrk}

\section{Meyer Wavelet Family}\label{app:MeyerWaveletFamily}
\begin{dfntn}\label{def:MeyerWaveletFamily}
  According to \cite{Meyer1992}, we set
  in Fourier space as father and mother wavelet
  \begin{align}
    \hat{\varphi}(k) = \frac{1}{\sqrt{2\pi}}
    \begin{cases}
      1 &\quad  \text{for } |k| \leq \frac{2}{3}\pi,\\
      \cos(\frac{\pi}{2}\rho(\frac{3}{2\pi}|k|-1)) &\quad \text{for } \frac{2\pi}{3} < |k| \leq \frac{4\pi}{3},\\
      0 &\quad \text{otherwise},
    \end{cases}
    \label{equ:MeyerWaveletFamilynuPhi}
  \end{align}
  and
  \begin{align}
    \hat{\psi}(k) = \frac{1}{\sqrt{2\pi}}e^{-i\frac{k}{2}}
    \begin{cases}
      \sin(\frac{\pi}{2}\rho(\frac{3}{2\pi}|k|-1)) &\quad \text{for } \frac{2}{3}\pi \leq |k| \leq \frac{4}{3}\pi,\\
      \cos(\frac{\pi}{2}\rho(\frac{3}{4\pi}|k|-1)) &\quad \text{for } \frac{4\pi}{3} < |k| \leq \frac{8\pi}{3},\\
      0 &\quad \text{otherwise},
    \end{cases}
    \label{equ:MeyerWaveletFamilynuPsi}
  \end{align}
  respectively, where $\rho: \mathbb{R} \rightarrow \mathbb{R} \in C^{r}$ is a
  parameter function (still to be fixed) with the properties
  $\rho(x) = 0$ for $x \leq 0$, $\rho(x) = 1$ for $x > 1$ and $\rho(x)
  + \rho(1-x) = 1$.
\end{dfntn}
For $c>0$  we obtain by dilation and translation
\begin{align*}
  \mathcal{F}[\varphi_{c,l,j}](k) & = \hat{\varphi}_{c,l,j}(k) =
  c^{-\frac{1}{2}}2^{-\frac{l}{2}}e^{-ic^{-1}2^{-l}jk}\hat{\varphi}(c^{-1}2^{-l}k),\\
  \mathcal{F}[\psi_{c,l,j}](k) & = \hat{\psi}_{c,l,j}(k) =
  c^{-\frac{1}{2}}2^{-\frac{l}{2}}e^{-ic^{-1}2^{-l}jk}\hat{\psi}(c^{-1}2^{-l}k),
\end{align*}
where the $\hat{\varphi}_{c,l,j}$ and $\hat{\psi}_{c,l,j}$ denote the
dilates and translates of (\ref{equ:MeyerWaveletFamilynuPhi}) and
(\ref{equ:MeyerWaveletFamilynuPsi}), respectively.

This wavelet family can be derived from a partition of unity $\sum_{l\in\mathbb{N}_0}
\eta_{c,l}(k) = 1, \forall k \in \mathbb{R}$ in Fourier space,
where
\begin{equation}\label{equ:MeyerWaveletFamilyxi}
  \eta_{c,l}(k) =
  \begin{cases}
    c 2\pi \hat{\varphi}_{c,0,0}^*(k)  \hat{\varphi}_{c,0,0}(k) &\quad \text{for } l=0,\\
    c 2^{l} \pi \hat{\psi}_{c,l-1,0}^*(k)  \hat{\psi}_{c,l-1,0}(k) &\quad \text{for } l > 0,
  \end{cases}
\end{equation}
see \cite{Meyer1992,Auscher1992} for details. The function $\rho$
controls the transition in the overlap of neighboring partition
functions $\eta_{c,l}$, and therefore determines their smoothness.
By (\ref{equ:MeyerWaveletFamilyxi}), the Fourier-side mother and
father wavelets inherit this smoothness.

There are various choices of $\rho$ with different smoothness
properties in (\ref{equ:MeyerWaveletFamilynuPsi}); see \cite{Meyer1992,Daubechies1992,Walnut2002}. Examples
are the Shannon wavelet
\begin{equation}\label{equ:MeyerWaveletFamilynuShannon}
  \begin{aligned}
    \rho(x) = \rho^0(x) &:=
    \begin{cases}
      0 & x \leq \frac{1}{2},\\
      1 & \text{otherwise},
    \end{cases}
  \end{aligned}
\end{equation}
the raised-cosine wavelet,
\begin{equation}\label{equ:MeyerWaveletFamilynuRaisedCosine}
  \begin{aligned}
    \rho(x) = \rho^1(x) &:=
    \begin{cases}
      0 & x \leq 0,\\
      x & 0 \leq x \leq 1,\\
      1 & \text{otherwise},
    \end{cases}
  \end{aligned}
\end{equation}
and the (standard) Meyer wavelet with infinitly vanishing moments
\begin{align*}
  \rho(x) = \rho^\infty (x):=
  \begin{cases}
    0 & x \leq 0,\\
    \frac{\tilde{\rho}(x)}{\tilde{\rho}(1-x)+\tilde{\rho}(x)} & 0 < x \leq 1,\\
    1 & \text{otherwise},
  \end{cases} \quad \text{ where } \quad
  \tilde{\rho}(x) =
  \begin{cases}
    0 & x \leq 0,\\
    e^{-\frac{1}{x^2}} & \text{otherwise}.
  \end{cases}
\end{align*}
Further Meyer-type constructions with different
smoothness properties are discussed in
\cite{Auscher1992,Hernandez1996,Kaiblinger2006,Yamada1991}.
The two symmetric support regions and the associated two
non-zero Fourier bands are in line with the Wilson--Malvar--Coifman--Meyer
construction used to bypass the Balian--Low obstruction; see,
e.g., \cite{Jaffard2001}. Recall that the Balian--Low theorem states that
the family of functions $g_{m,n}(x) = e^{2\pi i m x}g(x-n)$,
$m,n\in\mathbb{Z}$, which are related to the windowed Fourier
transform, cannot be an orthonormal basis of ${\cal L}^2(\mathbb{R})$
if the two integrals $\int_\mathbb{R} x^2|g(x)|^2 dx$ and
$\int_\mathbb{R} k^2|\hat{g}(k)|^2 dk$ are both finite. Thus, there
exists no orthonormal family for a Gaussian window function $g(x) =
\pi^{-1/4}e^{-{x^2}/{2}}$ which is both sufficiently regular and well
localized~\cite{Hernandez1996}.

In real space, these wavelets are $C^\infty$ with global support; in
Fourier space they are piecewise continuous, piecewise continuously
differentiable, and $C^\infty$, respectively, with compact support.
Moreover, they possess infinitely many vanishing moments. Their real-space
envelopes decay as $|x|\to\infty$ like $|x|^{-1}$ for $\rho^0$,
$|x|^{-2}$ for $\rho^1$, and faster than any polynomial
(sub-exponentially) for $\rho^\infty$. To our knowledge, explicit
closed formulas in both real and Fourier variables are only available
for the choices in (\ref{equ:MeyerWaveletFamilynuShannon}) and (\ref{equ:MeyerWaveletFamilynuRaisedCosine}).

\begin{rmrk}\label{rem:MeyerNotationBridge}
  The main text of this paper uses the level index set $\mathbb{N}=\{1,2,\dots\}$ and level windows $\varrho_l,\eta_l$, while this appendix starts with $l\in\mathbb{N}_0$ and the pair $(\hat\varphi,\hat\psi)$. A convenient bridge is
  \begin{equation*}
    \varrho_l(k):=\hat\psi_{c,l-1,0}(k), \qquad \eta_l(k):=\eta_{c,l-1}(k), \qquad l\in\mathbb{N},
  \end{equation*}
  after a fixed frequency rescaling (choice of $c>0$). This only changes harmless constants and keeps all support, overlap, and partition-of-unity statements equivalent to those used in Section \ref{sec:SparseCI}.
\end{rmrk}

\begin{lmm}\label{lem:MeyerDyadicSupportOverlap}
  Let $\{\eta_l\}_{l\in\mathbb{N}}$ be obtained from the Meyer construction as in Remark \ref{rem:MeyerNotationBridge}. Then, after absorbing rescaling constants into dyadic cubes $Q_l:=\{k\in\mathbb{R}: |k|\le 2^l\}$, it holds
  \begin{equation*}
    \sum_{l\in\mathbb{N}}\eta_l(k)=1,\quad k\in\mathbb{R},
    \qquad
    \mathrm{supp}\,\eta_1\subset Q_1,\ \mathrm{supp}\,\eta_2\subset Q_2,\ \mathrm{supp}\,\eta_l\subset Q_l\setminus Q_{l-2}\ (l>2).
  \end{equation*}
  In particular, the overlap multiplicity is uniformly bounded: For each $k\in\mathbb{R}$, only finitely many (in fact, uniformly boundedly many) indices $l$ satisfy $\eta_l(k)\neq 0$. The same bounded-overlap property holds for tensor products $\eta_{\Nvec{l}}=\bigotimes_{p=1}^N\eta_{l_p}$.
\end{lmm}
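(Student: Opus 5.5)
The statement to prove is Lemma~\ref{lem:MeyerDyadicSupportOverlap}. The plan is to verify everything on the one-dimensional Meyer windows of Definition~\ref{def:MeyerWaveletFamily} and equation~(\ref{equ:MeyerWaveletFamilyxi}), and only then pass to the tensor-product statement.

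\emph{Step 1: supports.} Using the dilation formula, $|\hat\psi_{c,l-1,0}(k)|^2=c^{-1}2^{-(l-1)}|\hat\psi(c^{-1}2^{-(l-1)}k)|^2$. Since $\hat\psi$ is supported in $\frac{2\pi}{3}\le|k|\le\frac{8\pi}{3}$, the function $\eta_{c,l}$ is supported in
\begin{equation*}
c\,2^{l-1}\tfrac{2\pi}{3}\le|k|\le c\,2^{l-1}\tfrac{8\pi}{3}=c\,2^{l}\tfrac{4\pi}{3}.
\end{equation*}
Likewise, $\eta_{c,0}$ is supported in $|k|\le c\,\tfrac{4\pi}{3}$.

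I would choose $c=\tfrac{3}{4\pi}\cdot 2$ and apply the index shift of Remark~\ref{rem:MeyerNotationBridge}, so that $\eta_l=\eta_{c,l-1}$. Then the outer radius of $\mathrm{supp}\,\eta_l$ is $2^{l}$. The inner radius is one quarter of the outer one, namely $2^{l-2}$, so $\mathrm{supp}\,\eta_l\subset Q_l\setminus\mathrm{int}\,Q_{l-2}$ for $l>2$. The boundary set has measure zero, and the statement "$\subset Q_l\setminus Q_{l-2}$" should be read up to this null set, or equivalently for the open support. The cases $l=1,2$ give $\mathrm{supp}\,\eta_l\subset Q_l$ directly.

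\emph{Step 2: partition of unity.} I would check $\sum_{l\ge0}\eta_{c,l}\equiv1$ on each dyadic annulus by cases.
\begin{itemize}
\item For $|k|\le c\tfrac{2\pi}{3}$, only $\eta_{c,0}=1$ contributes.
\item On an overlap region, exactly two consecutive windows are nonzero. After rescaling, they equal $\cos^2(\tfrac{\pi}{2}\rho(x))$ and $\sin^2(\tfrac{\pi}{2}\rho(x))$ with the same argument $x=\tfrac{3}{2\pi}|k'|-1$, so their sum is $1$.
\end{itemize}
The main bookkeeping obstacle is matching the arguments across consecutive levels. The cosine branch of $\hat\psi$ at level $l$ uses $\tfrac{3}{4\pi}|k|-1$, while the sine branch at level $l+1$, after dilation by $2$, uses $\tfrac{3}{2\pi}|k/2|-1$. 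These coincide, and the same matching works between $\hat\varphi$ and the first $\hat\psi$. The normalizing factors $2\pi c$ and $c2^l\pi$ in (\ref{equ:MeyerWaveletFamilyxi}) exactly cancel the $\tfrac{1}{2\pi}$ and dilation prefactors, leaving pure $\cos^2$ and $\sin^2$.

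\emph{Step 3: overlap and tensor products.} By Step 1, $\eta_l(k)\ne0$ forces $2^{l-2}<|k|\le2^l$. Hence at most two indices $l$, namely $\lceil\log_2|k|\rceil$ and that value plus one, or those near $l\le2$, are active at any $k$, which gives uniform bounded overlap. For $\eta_{\Nvec{l}}=\bigotimes_p\eta_{l_p}$, the sum factorizes:
\begin{equation*}
\sum_{\Nvec{l}}\eta_{\Nvec{l}}=\prod_p\sum_{l_p}\eta_{l_p}=1.
\end{equation*}
The number of active $\Nvec{l}$ at a point is at most the product of the per-factor counts, so at most $2^{N}$ (respectively $2^{ND}$ for the isotropic $D$-dimensional version built coordinatewise). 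This bound depends only on dimension, not on the point, which establishes the uniform bounded overlap for the tensor products.
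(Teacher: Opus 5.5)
Your proposal is correct and follows the same route as the paper's proof: read off the supports from Definition~\ref{def:MeyerWaveletFamily} and (\ref{equ:MeyerWaveletFamilyxi}), fix the rescaling, use the partition of unity, and tensorize componentwise. Where the paper defers to the Meyer literature, you carry out the explicit computations (the choice $c=3/(2\pi)$, the $\cos^2+\sin^2$ matching across consecutive levels, and the count of at most $2$ resp.\ $2^N$ active indices), and your boundary caveat is justified: since the outer-to-inner radius ratio is exactly $4$, no choice of $c$ places the closed support strictly inside $Q_l\setminus Q_{l-2}$, so the inclusion must be read for the set $\{\eta_l\neq 0\}$, as you do.
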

\begin{proof}
  The claim follows directly from the Meyer construction, in particular
  from (\ref{equ:MeyerWaveletFamilyxi}) and the support properties of
  $\hat\varphi$ and $\hat\psi$ in Definition~\ref{def:MeyerWaveletFamily},
  see also \cite{Meyer1992,Auscher1992,Daubechies1992}. After the fixed
  rescaling of Remark~\ref{rem:MeyerNotationBridge}, the supports are
  absorbed into dyadic cubes $Q_l$, yielding the stated inclusions. In
  one dimension, each frequency point intersects with only uniformly many
  neighboring dyadic windows, hence their overlap multiplicity is uniformly
  bounded. The tensor-product statement then follows componentwise.
\end{proof}

\begin{lmm}
\label{lem:WeightedLP}
Let $\tilde{w}:(\mathbb{R}^D)^N\to(0,\infty)$ satisfy the local comparability
condition~\eqref{equ:LocalWeightEquivalence}, and let
$f_{\Nvec{l}}:=\mathcal{F}^{-1}[\varrho_{\Nvec{l}}\hat{f}]$ be the level
components. Then
\begin{equation*}
  \|\tilde{w}\,f\|_{\mathcal{L}^2}^2
  \;\simeq\;
  \sum_{\Nvec{l}\in\mathbb{N}^N}\|\tilde{w}\,f_{\Nvec{l}}\|_{\mathcal{L}^2}^2,
\end{equation*}
with constants depending only on $C_{\tilde{w}}$ and the respective Meyer family.
\begin{proof}
See \cite[Ch.\,2]{Frazier1991}.
\end{proof}
\end{lmm}

\begin{lmm}
\label{lem:WeightedAO}
Under the same assumptions as for Lemma~\ref{lem:WeightedLP}, for each fixed
$\Nvec{l}\in\mathbb{N}^N$ and finitely supported $(a_{\Ndvec{j}})$, we have
\begin{equation*}
  \Bigl\|\tilde{w}\sum_{\Ndvec{j}}a_{\Ndvec{j}}\,\psi_{\Nvec{l},\Ndvec{j}}\Bigr\|
  _{\mathcal{L}^2}^2
  \;\simeq\;
  \sum_{\Ndvec{j}}\tilde{w}(2^{-\Nvec{l}}\Ndvec{j})^2\,|a_{\Ndvec{j}}|^2,
\end{equation*}
with constants independent of $f$, $\Nvec{l}$, and $(a_{\Ndvec{j}})$.
\begin{proof}
This follows from the Schwartz decay of Meyer wavelets (Definition~%
\ref{def:MeyerWaveletFamily} with $\rho=\rho^\infty$) and the local comparability
of $\tilde{w}$; see \cite[Theorem~7]{zeiser2012wavelet}
and \cite[Theorem~3]{kogure2022wavelet}.
\end{proof}
\end{lmm}

\section{Auxiliary Lemmata}\label{app:AuxiliaryLemmata}

\begin{lmm}\label{lem:MeyerWeightConstants}
  Let $w_{\mathrm{iso}}$ and $w_{\mathrm{mix}}$ be as in (\ref{equ:HyperbolicCrossTwiso})--(\ref{equ:HyperbolicCrossTwmix}), and let the dyadic family be chosen as in Remark \ref{rem:MeyerNotationBridge} and Lemma \ref{lem:MeyerDyadicSupportOverlap}.

  1. For $t,r\ge 0$, the weight $w:=w_{\mathrm{mix}}^t w_{\mathrm{iso}}^r$ satisfies the dyadic comparability condition from Lemma \ref{lem:MeyerWaveletCharacterizationReal}, i.e.\
  \begin{equation*}
    \sup_{\Nvec l\in\mathbb{N}^N}\ \sup_{k,k'\in\mathrm{supp}\,\eta_{\Nvec l}}\frac{w(k)}{w(k')}\le C_w<\infty,
  \end{equation*}
  with an admissible constant of the form $C_w\lesssim C_0^{\,tN+r}$, where $C_0\ge 1$ depends only on the fixed one-dimensional Meyer shape and is hence independent of $\Nvec l$, $D$, and $N$.

  2. For $\tilde w=w_{\mathrm{exp}}^{\hat s}$ with $\hat s>0$, the local comparability condition from (\ref{equ:LocalWeightEquivalence}) holds with
  \begin{equation*}
    C_{\tilde w}=e^{\hat s N}.
  \end{equation*}
  Indeed, if $\|x-y\|_\infty\le 1$, then
  $\big||x|_{\infty,1}-|y|_{\infty,1}\big|\le N$ and therefore
  \begin{equation*}
    e^{-\hat s N}\le \frac{\tilde w(x)}{\tilde w(y)}\le e^{\hat s N}.
  \end{equation*}
\end{lmm}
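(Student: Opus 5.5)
We need to prove Lemma \ref{lem:MeyerWeightConstants}, parts 1 and 2. Part 2 is already essentially argued within the statement, so most of the work goes into part 1.

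\textbf{Part 1: one-dimensional comparison.} The plan is to reduce everything to one-dimensional dyadic comparability.

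By Lemma \ref{lem:MeyerDyadicSupportOverlap} and its tensorized form, every $\Ndvec{k}\in\mathrm{supp}\,\eta_{\Nvec l}$ satisfies $\dvec{k}_p\in Q_{l_p}$ for each particle $p$. Moreover, for $l_p>2$ we have $\dvec{k}_p\notin Q_{l_p-2}$. Hence
\begin{equation*}
  2^{l_p-2}<|\dvec{k}_p|_\infty\le 2^{l_p}\quad (l_p>2),
  \qquad
  |\dvec{k}_p|_\infty\le 4 \quad (l_p\le 2).
\end{equation*}
In both cases this gives
\begin{equation*}
  \tfrac{1}{16}\,2^{2l_p}\le 1+|\dvec{k}_p|_\infty^2\le 1+2^{2l_p}\le 2\cdot 2^{2l_p},
\end{equation*}
where in the small-level case we use only $1\ge 2^{2l_p}/16$. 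Consequently, for $\Ndvec{k},\Ndvec{k}'\in\mathrm{supp}\,\eta_{\Nvec l}$ the one-particle factors satisfy
\begin{equation*}
  \frac{1+|\dvec{k}_p|_\infty^2}{1+|\dvec{k}'_p|_\infty^2}\le 32 .
\end{equation*}

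\textbf{Part 1: assembling the weight.} For the mixed weight, taking the product over the $N$ particles and then the power $t/2$ gives
\begin{equation*}
  \frac{w_{\mathrm{mix}}(\Ndvec{k})^t}{w_{\mathrm{mix}}(\Ndvec{k}')^t}\le 32^{tN/2}.
\end{equation*}

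For the isotropic weight, we use that $\max_p(1+|\dvec{k}_p|_\infty^2)$ is comparable to $\max_p 2^{2l_p}$ with the same constants $1/16$ and $2$. This yields
\begin{equation*}
  \frac{w_{\mathrm{iso}}(\Ndvec{k})^r}{w_{\mathrm{iso}}(\Ndvec{k}')^r}\le 32^{r/2}.
\end{equation*}
Multiplying the two bounds gives $C_w\le C_0^{tN+r}$ with $C_0=\sqrt{32}$. This constant is independent of $\Nvec l$, $D$ and $N$, up to the rescaling constant $c$ from Remark \ref{rem:MeyerNotationBridge}, which only changes $C_0$.

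\textbf{Part 1: the auxiliary hypothesis.} Lemma \ref{lem:MeyerWaveletCharacterizationReal} also assumes $2^{\Nvec l}\in\mathrm{supp}\,\eta_{\Nvec l}$. I would check this directly: the point with $\dvec{k}_p=2^{l_p}$ in one coordinate lies within the support band. If it instead sits on the boundary, the same bounds above still compare $w(2^{\Nvec l})$ with $w$ on the support, so the lemma's conclusion is unaffected.

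\textbf{Part 2.} It suffices to verify
\begin{equation*}
  \big||\Ndvec x|_{\infty,1}-|\Ndvec y|_{\infty,1}\big|\le \sum_p \big||\dvec x_p|_\infty-|\dvec y_p|_\infty\big|\le \sum_p |\dvec x_p-\dvec y_p|_\infty\le N ,
\end{equation*}
using the reverse triangle inequality in each particle, and then exponentiate with factor $\hat s$. Evenness of $w_{\mathrm{exp}}$ is immediate.

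The main obstacle is the low levels $l_p\le 2$, where the support is a full cube and does not exclude the origin. This is handled by the lower bound $1+|\dvec{k}_p|^2\ge 1$, which is why the $+1$ in the weights is essential. The only other delicate point is keeping the constant's $N$-dependence in the exponent form $C_0^{tN+r}$, rather than allowing any $D$-dependence; using the $\infty$-norm per particle avoids $D$ entirely.
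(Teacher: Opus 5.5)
Your proof is correct and follows the same route as the paper. Part 1 uses componentwise comparability of $1+|\mathbf{k}_p|_\infty^2$ from the dyadic support inclusions, multiplied over particles for $w_{\mathrm{mix}}$ and passed through the maximum for $w_{\mathrm{iso}}$, and part 2 uses the reverse triangle inequality per particle block. The only difference is that you make the constant explicit ($C_0=\sqrt{32}$), whereas the paper leaves it as an unspecified shape-dependent constant.
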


\begin{proof}
  For the first claim, let $k,k'\in\operatorname{supp}\eta_{\Nvec l}$.
  By Lemma~\ref{lem:MeyerDyadicSupportOverlap}, each component
  $k_p,k_p'$ belongs to the same dyadic annulus up to fixed constants
  depending only on the shape of the one-dimensional Meyer wavelet basis function. Hence each
  factor $1+|k_p|_\infty^2$ is comparable to $1+|k_p'|_\infty^2$ with a
  constant independent of $\Nvec l$, and similarly
  $1+\max_p |k_p|_\infty^2$ is comparable to
  $1+\max_p |k_p'|_\infty^2$. Multiplying these componentwise
  comparability bounds yields the desired dyadic comparability for
  $w_{\mathrm{mix}}^t w_{\mathrm{iso}}^r$.

  For the second claim, if $\|x-y\|_\infty\le 1$, we have $\big||x_p|_\infty-|y_p|_\infty\big|\le 1$ for each
  particle block. Hence
  \begin{equation*}
    \big||x|_{\infty,1}-|y|_{\infty,1}\big|
    = \Big|\sum_{p=1}^N |x_p|_\infty-\sum_{p=1}^N |y_p|_\infty\Big|
    \le \sum_{p=1}^N \big||x_p|_\infty-|y_p|_\infty\big|
    \le N.
  \end{equation*}
  Exponentiating immediately gives
  \begin{equation*}
    e^{-\hat s N}\le \frac{e^{\hat s |x|_{\infty,1}}}{e^{\hat s |y|_{\infty,1}}}\le e^{\hat s N},
  \end{equation*}
  which is the desired bound.
\end{proof}

\begin{lmm}\label{lem:GeneralSparseGridSeriesCardinalityILTD}
  For $T<1$, $L\in\mathbb{N}$, it holds
  \begin{equation}\label{equ:GeneralSparseGridSeriesCardinalityLTJR2}
    \sum_{\Nvec{l}\in\mathcal{I}_L^T} 2^{D|\Nvec{l}|_1} \leq
    \begin{cases}
      \Oh(2^{DL}) &\quad \text{for } 0<T<1,\\
      \Oh(2^{DL}{L}^{N-1}) &\quad \text{for } T=0,\\
      \Oh(2^{DL\frac{T-1}{T/N-1}}) &\quad \text{for } T<0,\\
      \Oh(2^{DLN}) &\quad \text{for } T=-\infty.\\
    \end{cases}
  \end{equation}
  \begin{proof}
    Since we have
    \begin{equation*}
      \sum_{\Nvec{l}\in\mathcal{I}_L^T} 2^{D|\Nvec{l}|_1} \lesssim (\sum_{\Nvec{l}\in\mathcal{I}_L^T} 2^{|\Nvec{l}|_1-N})^{D},
    \end{equation*}
    estimate (\ref{equ:GeneralSparseGridSeriesCardinalityLTJR2})
    follows directly with Lemma \ref{lem:HyperbolicCrossTIndexSetCardinality} for all cases except
    for $T=0$. The case of $T=0$ results from
    \begin{align*}
      \sum_{\Nvec{l}\in\mathcal{I}_L^0} 2^{D(|\Nvec{l}|_1-N)} &=
      \sum_{j=N}^{L+N-1}2^{D(j-N)}\sum_{|\Nvec{l}|_1=j} 1\\
      &= \sum_{j=0}^{L-1}2^{Dj}\binom{N-1+j}{N-1}\\
      &= \frac{1}{(N-1)!}\sum_{j=0}^{L-1}\left.\left(x^{j+N-1}\right)^{(N-1)}\right|_{x=2^{D}}\\
      &= \frac{1}{(N-1)!}\left.\left(x^{N-1}\frac{1-x^L}{1-x}\right)^{(N-1)}\right|_{x=2^{D}}\\
      &= \frac{1}{(N-1)!}\sum_{j=0}^{N-1}\binom{N-1}{j}\left.\left(x^{N-1}-x^{L+N-1}\right)^{(j)}\left(\frac{1}{1-x}\right)^{(N-1-j)}\right|_{x=2^{D}}\\
      &\lesssim 2^{DL}\sum_{j=0}^{N-1}\binom{L+N-1}{j}2^{D(N-1-j)}\\
      &\lesssim 2^{DL}L^{N-1},
    \end{align*}
    see also \cite{Griebel2008} and the references cited therein.
  \end{proof}
\end{lmm}

\end{document}